\def\format{A} 
\def\2plus{{\tt (++)}}
\def\3plus{{\tt (+++)}}
\def\4plus{{\tt (++++)}}
\def\5plus{{\tt (+++++)}}
\algnewcommand{\LineComment}[1]{\State \(\triangleright\) \emph{\color{blue} #1}}
\algnewcommand{\Invariant}[1]{\State \(\triangleright\) \emph{\color{red} #1}}
\theoremstyle{definition}
\newtheorem{theorem}{Theorem}
\newtheorem{lemma}{Lemma}
\newtheorem{claim}{Claim}
\newtheorem{observation}{Observation}
\newcommand{\ignore}[1]{}
\def\2plus{{\tt (++)}}
\def\3plus{{\tt (+++)}}
\def\4plus{{\tt (++++)}}
\def\5plus{{\tt (+++++)}}
\newlength{\algobox}
\newcommand{\OPT}{\mathtt{OPT}}
\newcommand{\ALG}{\mathtt{ALG}}
\newcommand{\opt}{\mathtt{opt}}
\newcommand{\calA}{\mathcal{A}}
\newcommand{\calD}{\mathcal{D}}
\newcommand{\calI}{\mathcal{I}}
\newcommand{\calM}{\mathcal{M}}
\newcommand{\calP}{\mathcal{P}}
\newcommand{\RR}{\mathbb{R}}
\DeclareMathOperator*{\Exp}{\ensuremath{{\mathbf{Exp}}}}
\DeclareMathOperator*{\Prob}{\ensuremath{\mathbf{Pr}}}
\renewcommand{\Pr}{\Prob}
\newcommand{\eps}{\ensuremath{\varepsilon}}
\newcommand{\grad}{\nabla}
\newcommand{\etal}{{\it et al.\,}}
\newcommand{\eq}{\leftarrow}
\def\primdual{\mathsf{PrimDual}}
\def\greedydual{\mathsf{GreeDual}}
\def\bqs{\mathsf{BQS}}
\newcommand{\luc}[1]{{\color{red} {\bf luc:} #1}}
\title{A Primal-Dual Analysis of Monotone Submodular Maximization}
\author{Deeparnab Chakrabarty \and Luc Cote}
\date{}
\newcommand{\ipco}[1]{}
\newcommand{\arxiv}[1]{#1}
\newcommand*{\dt}[1]{%
	\accentset{\mbox{\large\bfseries .}}{#1}}
\def\lp{\mathsf{lp}}
\def\dual{\mathsf{dual}}
\newcommand{\msm}{$k$-MSM}
\newcommand{\mmsm}{matroid-MSM}
\newcommand{\spann}{\mathsf{span}}
\DeclareMathOperator*{\argmax}{arg\,max}
\begin{document}

\maketitle

\begin{abstract}
In this paper we design a new primal-dual algorithm for the classic discrete optimization problem of maximizing a monotone submodular function subject to a cardinality constraint achieving the optimal approximation of $(1-1/e)$. This problem and its special case, the maximum $k$-coverage problem, have a wide range of applications in various fields including operations research, machine learning, and economics. While greedy algorithms have been known to achieve this approximation factor, our algorithms also provide a dual certificate which upper bounds the optimum value of any instance. This certificate may be used in practice to certify much stronger guarantees than the worst-case $(1-1/e)$ approximation factor.
\arxiv{\newline\indent}
We additionally consider the generalized problem of maximizing a monotone submodular function subject to a matroid constraint. Here we discuss an analysis of the greedy and continuous greedy and for each algorithm present a dual-fitting technique to achieve an instance-specific upper-bound certificate.
\end{abstract}
\setcounter{page}{1}
\section{Introduction}
A set function $f: 2^V \to \RR$ defined over subsets of a ground set $V$ of $n$ elements is submodular if for all subsets $A$ and $B$, $f(A) + f(B)\geq f(A\cap B) + f(A\cup B)$. Such a function is monotone if $f(A)\leq f(B)$ whenever $A\subseteq B$. The monotone submodular maximization problem under a cardinality constraint takes an evaluation oracle to a monotone submodular function $f$, a positive integer $k$, and wishes to find a subset $S\subseteq V$ with $|S| = k$ which maximizes $f(S)$. We call this the \msm~problem in short.

Submodular functions arise in many areas, and therefore, \msm~is a fundamental optimization problem with a wide range of applications in various fields including facility location~\cite{CornFN77,NemhW78}, viral marketing~\cite{KempKT03}, sensor placement~\cite{KrauG05}, document summarization~\cite{LinB11}, web searching and indexing~\cite{DasgGKOPT07,BaezBC15} and algorithmic economics~\cite{PapaSS08,DughR09,SchuU13}. 
An important special case of submodular functions are {\em coverage functions} where the ground set itself corresponds to sets of a set-system, and $f(S)$ indicates the size/weight of the union of the sets indexed by $S$. The \msm~problem for coverage functions is called the maximum $k$-coverage problem, and many of the applications~\cite{KrauG05,LinB11,BaezBC15} mentioned above are in fact special cases of this problem. 
More than forty years ago, Nemhauser, Wolsey and Fisher~\cite{NemhWF78} proved that a simple greedy algorithm for \msm, namely 
one that proceeds in $k$ rounds, and in each round picking an element which leads to the largest function value increase, achieves an $\eta_k := 1 - \left(1-1/k\right)^k \to (1 - 1/e)$-factor approximation to the optimum value. More precisely, for any monotone submodular function $f$, if $A$ is the output of the above algorithm, then $f(A) \geq \eta_k \cdot \max_{S \subseteq V:|S|=k} f(S)$.
Complementing the above result, Nemhauser and Wolsey~\cite{NemhW78} showed that no algorithm making polynomially many evaluation queries to a submodular function can achieve an approximation factor exceeding $(1-1/e)$. 
For the special case of coverage functions, Feige~\cite{Feige98} proved that it is NP-hard to obtain an approximation factor better than $(1-1/e)$.

Although the worst-case approximation factor of any algorithm for \msm~cannot be better than $(1-1/e)$, it has been noted in many works~\cite{KrauSG08,BaezBC15,SakaI18,SomaY18,PokuST20,BalkQS21} that the greedy algorithm mentioned above often performs much better on typical instances that arise in practice. Note that to make such an assertion empirically, one needs to either compute the optimum value or an upper bound on the optimum value of an \msm~instance. For instances with very small size, one could imagine a brute force search. 
As early as 1981, Nemhauser and Wolsey~\cite{NemhW81} proposed a non-trivial integer linear program with exponentially many constraints that exactly captures \msm, and then suggested constraint generation and branch-and-bound methods to solve~\msm~exactly. Indeed, this is what Krause, Singh, and Guestrin~\cite{KrauSG08} do, but this integer programming approach also only works for small instances 
(\cite{KrauSG08} report $n=16$ and $k\leq 5$). 
For succinct submodular functions such as coverage functions, \msm~can be expressed succinctly as a linear program and this was used by Baeza-Yates, Boldi, and Chierichetti~\cite{BaezBC15} as an upper bound to compare an algorithm's performance. For general monotone submodular functions, other works (eg, \cite{SakaI18,BalkQS21} and more examples within) have proved upper bounds on the optimum value which can be computed fast and used that as the benchmark. However, the {\em worst-case} guarantees of such upper bounds were either missing (in case of Sakaue and Ishihata's work~\cite{SakaI18}) or sub-optimal (Balkanski, Qian, and Singer~\cite{BalkQS21} could only prove a worst-case $1/2$-approximation for their novel upper-bound). At this point, we remark that the BQS-upper bound~\cite{BalkQS21} is empirically the lowest (that is, the best) upper bound.
We were motivated by the question: 
\emph{is there a principled and efficient way to obtain upper bounds on \msm~which have optimal worst case guarantees and also have good empirical performance? }\smallskip

Our answer is in the affirmative. To do so, we consider the {\em dual} of the Nemhauser-Wolsey integer linear program alluded to above. Our main contribution is a primal-dual algorithm (see~\Cref{sec:2}) with respect to this linear program which, for any monotone submodular function $f$, returns a set $S$ with $k$ elements, and a feasible dual solution of value $\dual \geq \opt$ with guarantee that $f(S) \geq \eta_k\cdot \dual$.
One benefit of a primal-dual algorithm is that it immediately provides instance-wise guarantees (by simply comparing $f(S)$ and $\dual$), and since the ``dual portion'' of our  algorithm is designed to try and decrease the dual value as much as possible, the typical ratio that we empirically observe (see~\Cref{sec:experiment}) is much higher than $(1-1/e)$ and comparable to the current state-of-the-art empirical upper bounds due to Balkanski, Qian, and Singer~\cite{BalkQS21}. We remark that 
our ``primal'' algorithm is different from the greedy algorithm, and there exist instances where it out-performs the greedy algorithm. Having said that, we must also say that in almost all our empirical instances the greedy algorithm usually does as well or (slightly) better. Furthermore, this principled way of obtaining instance-wise upper-bounds readily suggests generalizations to richer constraints than just the cardinality constraint. For instance, for matroid constraints we can provide (see~\Cref{sec:matroid-dualfit}) dual-fitting explanations for the greedy and continuous-greedy algorithms opening up possibilities of empirical evaluations to see if these algorithms typically do give much better than their worst-case approximation factors even in these richer constraint situations.


At a technical level, from an approximation algorithms point of view, we believe that our primal-dual algorithm for a {\em maximization} problem is of interest.
A primal-dual algorithm for an optimization problem which can be expressed as an integer linear program {\em simultaneously} 
constructs a feasible (integer) primal solution and a feasible dual solution. In this schema, usually the process proceeds in iterations with the primal and dual steps informing each other in some fashion.
The first primal-dual algorithms were designed for {\em exact} algorithms with Kuhn's Hungarian method~\cite{Kuhn55} to solve the maximum weight bipartite matching being the classic 
example. 
The first primal-dual {\em approximation} algorithm is due to Bar-Yehuda and Even~\cite{BarYE81} who described $2$-approximate primal-dual algorithm for the vertex cover problem.
Primal-dual algorithms have had a lot of success (see~\cite{GoemW97,Will02,BarYR05} for surveys) for designing approximation algorithms for {\em minimization} problems like~\cite{AgraKK91,GoemW95,BafnBF99,WillGMV95,JainV01} to name a few.
On the other hand, we are not aware of many works which give primal-dual approximation algorithms for maximization problems. 
Indeed, this was stated as an ``open problem'' in the late 90s, and the first such result is due to Bar-Noy, Bar-Yehuda, Freund, Naor, and Schieber~\cite{BarBFNS01} who gave a general framework for solving many allocation and scheduling problems to maximize throughput/utility. 
Their algorithms are based on the local-ratio framework~\cite{BarYE81,BarYR05}, and can be interpreted as a primal-dual algorithm which starts with an infeasible dual solution which proceeds in rounds picking violated dual constraints and picking corresponding primal variables, raising dual variables to fix this constraint, and terminating once one reaches dual-feasibility. 
In contrast, our algorithm always maintains a feasible dual solution and proceeds in rounds {\em decreasing} the dual objective, making primal picks when we can't, and terminating when we can't make any more primal choices. 
One other primal-dual approximation algorithm for a maximization problem that we are aware of is for the (offline) Adwords problem (also called the maximum budgeted allocation problem); Chakrabarty and Goel~\cite{ChakG10} describe a primal-dual $\frac{3}{4}$-approximation for this problem, and their algorithm 
also falls in our class which maintains a feasible dual solution throughout. 
We think more research is warranted for primal-dual algorithms for NP-hard maximization problems.


We also remark that the primal-dual technique also has had a tremendous influence in the design of online algorithms following the seminal works of Buchbinder and Naor~\cite{BuchN09,BuchN2009}. 
In particular, for problems such as online matching and its generalization the online Adwords problem, this paradigm can give optimal $(1-1/e)$-competitive algorithms for the maximum matching~\cite{DevaJK13} and the Adwords problem~\cite{BuchJN07} (with small bid-to-budget ratio), reproving earlier results of~\cite{KarpVV90} and~\cite{MSVV07}. Primal-dual techniques have also been applied in the streaming setting for matching problems, both with linear~\cite{PazS18,GhafW18} and submodular~\cite{LeviW21} objectives.
All these algorithms can indeed be reinterpreted as primal-dual approximation algorithms, but they either have exact or much better approximation ratios in the offline setting.

\subsection{Related Works, Other Results, and Future Directions}
There is a plethora of works related to the \msm~problem, and we won't attempt to survey these (see~\cite{KraG14,BuchF18} and citations within for such surveys) in this paper. Rather, we mention a few relevant works most related to our results. Wolsey~\cite{Wolsey82} studied the submodular set cover problem where each element $j\in V$ has a cost and the objective is to find the minimum cost subset $S$ with $f(S) = f(V)$. This problem generalizes the minimum set cover problem.~\cite{Wolsey82} analyzed a natural greedy algorithm via a dual-fitting approach to a minimization linear program with exponentially many constraints akin to the Nemhauser-Wolsey LP~\cite{NemhW81}. Later on, Fujito~\cite{Fujito99} described a primal-dual algorithm based on this LP for the submodular set cover problem,
which is a {\em minimization} problem, and in particular, Fujito's algorithm generalized the primal-dual algorithm for the set cover problem. Note, there is an $f$-approximate primal-dual algorithm where $f$ is the maximum number of sets an element can appear in. Fujito's approximation factor is some function of the submodular function which precisely equals $f$ when the submodular function is the coverage function. 

Recently, a primal-dual algorithm was designed for a submodular matching problem by Levin and Wajc~\cite{LeviW21} in the streaming setting. In particular, they consider the problem of edges of a bipartite graph streaming by, and the objective is to pick a matching with as large a submodular function value as possible. In doing so,~\cite{LeviW21} use the concave closure of the submodular function, and use this to generalize a primal-dual based $(2+\eps)$-approximation streaming algorithm for weighted matching due to Paz and Schwartzman~\cite{PazS18}. Their algorithm is quite different and rather unrelated from the primal-dual algorithm we design for \msm.

Submodular function maximization has also been extensively studied under richer constraints. For instance, when one constrains the returned set $S$ to be an independent set in a matroid, one obtains the \mmsm~problem. Fisher, Nemhauser, and Wolsey~\cite{FishNW78} proved that a natural greedy algorithm obtains an $\frac{1}{2}$-approximation, but improving this remained an open problem till the seminal work of Calinescu, Chekuri, P\'al, and Vondr\'ak~\cite{CaliCPV07,CaliCPV11} which studied continuous extensions of submodular functions. Indeed, in their IPCO version~\cite{CaliCPV07}, the authors describe an upper bound on $\opt$ which is precisely the Nemhauser-Wolsey LP value (when properly generalized to capture matroid constraints; see~\Cref{sec:2}), and they prove that this LP has an integrality gap of $(1-1/e)$. 
As we mention earlier, we can obtain ``dual-fitting'' results for both the deterministic and the continuous greedy algorithms. More precisely, one can design algorithms which construct dual solutions to the (generalized) Nemhauser-Wolsey LP and compare the ``primal'' solutions of these algorithms to these duals (see~\Cref{sec:matroid-dualfit}). 
We stress that such a result is probably implicit in earlier works~\cite{CaliCPV11,ChekVZ18}, but we believe making this explicit is worthwhile. In particular, it could lead to ideas for obtaining {\em deterministic} primal-dual algorithms for \mmsm~which obtain an $(1-1/e)$-approximation. This is an open question; currently the only other $(1-1/e)$ algorithm via local search due to Filmus and Ward~\cite{FilmW14} is also randomized, and the best deterministic approximation factor is $0.5008$ due to Buchbinder, Feldman and Garg~\cite{BuchFG19}.

Another generalization which may be worth studying with the primal-dual lens is when the constraint is a knapsack constraint. When the submodular function is a coverage function, the problem is the budgeted coverage problem for which Khuller, Moss, and Naor~\cite{KhulMN99} gave the first $(1-1/e)$-approximation. This was generalized to general monotone submodular functions by Sviridenko~\cite{Svi04}. We leave the study of these problems using the dual lens to obtain better instance-wise approximations as a future direction of study.

\subsection{Technical Preliminaries}
Fix a monotone submodular function $f:2^V\to \RR$ defined over subsets of a ground set $V$. We will assume $f(\emptyset) = 0$ which implies the range of $f$ is non-negative due to monotonicity.
For any subset $S\subseteq V$ and $j\in V$, we use the notation $f_{S}(j) := f(S\cup j) - f(S)$ to denote the marginal utility of adding $j$ to $S$. Note that monotonicity implies that $f_S(j) \geq 0$ for all $j\in V$, and note that $f_S(j) = 0$ for all $j\in S$. An equivalent definition of submodularity is
\[
\forall S\subseteq T, \forall j\in V~~~~ f_S(j) \geq f_T(j)
\]
The following fact is key for the Nemhauser-Wolsey~\cite{NemhW81} formulation (we provide the short proof for completeness.).
\begin{claim}[Proposition 2.2 in~\cite{NemhWF78}]\label{clm:fnw}
	For a non-negative, monotone, submodular function $f:2^V \to \RR_+$ and for any two subsets $S, T$ of $V$, $f(S	) + \sum_{j\in T} f_S(j) \geq f(T)$.
\end{claim}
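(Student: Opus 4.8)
The plan is to reduce the inequality to a single telescoping expansion of $f(S\cup T)-f(S)$ along the elements of $T\setminus S$, and then glue on monotonicity at the end. First I would enumerate $T\setminus S=\{j_1,\dots,j_m\}$ in an arbitrary order and set $S_i:=S\cup\{j_1,\dots,j_{i-1}\}$, so that $S_0=S$ and $S_m=S\cup T$. Telescoping gives
\[
f(S\cup T)-f(S)=\sum_{i=1}^{m}\bigl(f(S_i\cup j_i)-f(S_i)\bigr)=\sum_{i=1}^{m} f_{S_i}(j_i).
\]
Since $S\subseteq S_i$ for every $i$, the marginal form of submodularity recalled above yields $f_{S_i}(j_i)\le f_S(j_i)$ for each $i$, hence $f(S\cup T)-f(S)\le \sum_{i=1}^{m} f_S(j_i)=\sum_{j\in T\setminus S} f_S(j)$. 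Because $f_S(j)=0$ for every $j\in S$, the last sum is exactly $\sum_{j\in T} f_S(j)$ (equivalently, monotonicity makes every term we would add nonnegative, so extending the sum from $T\setminus S$ to $T$ only helps). Finally, monotonicity gives $f(T)\le f(S\cup T)$, and chaining the two displayed inequalities produces $f(T)\le f(S)+\sum_{j\in T} f_S(j)$, which is the claim.

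There is no real obstacle here; it is a one-line telescoping argument. The only points that need a moment's care are that the enumeration order of $T\setminus S$ is immaterial, since each marginal $f_{S_i}(j_i)$ is bounded by $f_S(j_i)$ irrespective of the prefix already added, and that each hypothesis is invoked exactly where it is used: submodularity for the per-term bound $f_{S_i}(j_i)\le f_S(j_i)$, and monotonicity both for $f(T)\le f(S\cup T)$ and for the harmless passage from the sum over $T\setminus S$ to the sum over $T$. Non-negativity of $f$ itself plays no role beyond what is already implied by monotonicity together with $f(\emptyset)=0$.
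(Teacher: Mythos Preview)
Your proof is correct and is essentially identical to the paper's: both enumerate $T\setminus S$, telescope $f(S\cup T)-f(S)$ into marginals over growing prefixes, bound each marginal by $f_S(j_i)$ via submodularity, and finish with monotonicity. (There is a harmless indexing slip---with $S_i:=S\cup\{j_1,\dots,j_{i-1}\}$ one has $S_1=S$ and $S_{m+1}=S\cup T$, not $S_0$ and $S_m$---but your telescoping sum is written correctly regardless.)
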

\arxiv{
\begin{proof}
	Order the elements $\{j_1, \ldots, j_r\}$ of $T\setminus S$ arbitrarily. Now, note that 
		    $f(T\cup S) - f(S)$ equals $\sum_{i=1}^r f_{S \cup \{j_1, \ldots, j_{i-1}\}}(j_i)$
		    which, by submodularity, is $\leq\sum_{i=1}^r f_S(j_i) = \sum_{j\in T\setminus S} f_S(j) = \sum_{j\in T} f_S(j)$.
%
	Rearranging, we get $f(S) + \sum_{j\in T} f_S(j) \geq f(T\cup S)$ which is at least $f(T)$ by monotonicity.	
\end{proof}
}
\noindent
We use the following analytic fact that forms the basis for almost every $\eta_k$-factor analysis.
\begin{claim}\label{clm:ana}
	Let $0 = x^{(1)}, x^{(2)}, \ldots, x^{(k+1)}$ be a sequence of numbers satisfying
	\[
	\forall 1\leq i\leq k,~~~ x^{(i+1)} \geq \left(1 - \frac{1}{k}\right)x^{(i)} + \frac{1}{k}\cdot z
	\]
	for some number $z$. Then, $x^{(k+1)} \geq \eta_k \cdot z$ where $\eta_k = 1 - \left(1 - \frac{1}{k}\right)^k$
\end{claim}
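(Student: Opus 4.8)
The plan is to unroll the recursion. Starting from the inequality $x^{(i+1)} \geq \left(1 - \frac{1}{k}\right)x^{(i)} + \frac{1}{k}\cdot z$, I would apply it repeatedly from $i = k$ down to $i = 1$, at each step substituting the lower bound for $x^{(i)}$. After expanding, $x^{(k+1)}$ is bounded below by a contribution of the form $\left(1-\frac1k\right)^k x^{(1)}$, which vanishes since $x^{(1)} = 0$, plus a geometric sum $\frac{z}{k}\sum_{j=0}^{k-1}\left(1-\frac1k\right)^j$. Summing the geometric series gives $\frac{z}{k}\cdot\frac{1 - (1-1/k)^k}{1 - (1-1/k)} = z\left(1 - (1-1/k)^k\right) = \eta_k z$, as desired.

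Concretely, I would either do a clean induction or write the closed form directly. For the induction route, I would prove by downward induction on $i$ the statement $x^{(k+1)} \geq \left(1-\tfrac1k\right)^{k+1-i} x^{(i)} + \left(1 - \left(1-\tfrac1k\right)^{k+1-i}\right) z$; the base case $i = k+1$ is trivial, and the inductive step plugs the recursion for $x^{(i)}$ into the hypothesis and collects terms, using that $\left(1-\tfrac1k\right)^{k+1-i}\cdot\tfrac1k + \left(1-\tfrac1k\right)\cdot\left(1-\left(1-\tfrac1k\right)^{k-i}\right)$ telescopes correctly. Taking $i = 1$ and using $x^{(1)} = 0$ yields $x^{(k+1)} \geq \left(1 - (1-1/k)^k\right) z = \eta_k z$.

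One subtlety worth flagging: the statement does not assume $z \geq 0$, so I would make sure the manipulations are valid regardless of the sign of $z$ — they are, since unrolling the recursion only ever multiplies the given inequalities by the nonnegative coefficients $\left(1-\tfrac1k\right)^j$ and adds them, which preserves the direction of the inequality. This is really the only thing to be careful about; there is no genuine obstacle here, as the claim is a standard bookkeeping lemma. The one-line heart of the argument is the geometric sum evaluation, and everything else is routine substitution.
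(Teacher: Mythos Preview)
Your proposal is correct and is essentially the same approach as the paper's: the paper's one-line proof does forward induction to show $x^{(i)} \geq \bigl(1 - (1-1/k)^{i-1}\bigr)z$ for all $i$, which is exactly the unrolled geometric sum you compute (your downward induction is just the mirror image of their forward one). The care you take about the sign of $z$ is a nice addition but not something the paper discusses.
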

\begin{proof}
	Using induction, one easily notices that $x^{(i)} \geq \left(1 - \left(1 - \frac{1}{k}\right)^{i-1}\right)z$ for all $1\leq i\leq k+1$.
\end{proof}

\section{A Primal-Dual Algorithm under the Cardinality Constraint}\label{sec:2}

The following two linear programs are the primal formulation of \msm~and its dual.\\
\begin{minipage}{0.46\textwidth}
\ipco{\fontsize{8pt}{12pt}}
	\begin{alignat}{6}
		\lp:=&&\max~~~~\lambda~&&~&\tag{\sf{primal}} \label{eq:primal} \\
		&&\sum_{j \in V} x_j &~=~ && k~&\tag{P1}\label{eq:p1}\\
		&&\lambda - \sum_{j \in V} f_{S}(j)x_j &~\leq~ && f(S), && ~\forall S \subseteq V \tag{P2}\label{eq:p2}\\
		&&x_j &~\geq~&& 0, && ~\forall j\in V \notag
	\end{alignat}
\end{minipage} ~~~\vline
\begin{minipage}{0.5\textwidth}
\ipco{\fontsize{8pt}{12pt}}
	\begin{alignat}{6}
		\min~   &&~k\alpha + \sum_{S\subseteq V}\tau_S f(S)  &=:&& \dual \notag\\
		&&\alpha - \sum_{S\subseteq V}\tau_S f_{S}(j) &~\geq~ && 0,~ \forall j \in V \tag{D1} \label{eq:d1}\\
		&& \sum_{S\subseteq V} \tau_S &~=~&& 1 \tag{D2} \label{eq:d2}\\
		&& \tau_S &~\geq~&& 0, ~\forall S\subseteq V \tag{D3} \label{eq:d3}
	\end{alignat}
\end{minipage}
\vspace{1ex}
\paragraph{The Primal.} As mentioned in the introduction, the primal LP was first explicitly studied by Nemhauser and Wolsey~\cite{NemhW81}
who used~\Cref{clm:fnw} to prove that when $x_j \in \{0,1\}$, the value of the integer program is precisely the optimum value\arxiv{\footnote{To quickly see why $\lp \geq \opt$,
	let $S^*$ be the optimal solution and consider $x_j = 1$ for $j\in S^*$ and $x_j = 0$ for $j\in V\setminus S^*$ with  $\lambda = f(S^*)$.
	$(x,\lambda)$ is a feasible solution because for any subset $S\subseteq V$ we get
	$f(S) + \sum_{j \in V} f_{S}(j)x_j ~=~ f(S) + \sum_{j\in S^*} f_S(j)$ and the latter is $\geq f(S^*) = \lambda$ by~\Cref{clm:fnw}.}} of the \msm~problem.
The motivation of~\cite{NemhW81} was to use constraint generation and branch-and-bound methods to exactly solve a \msm~instance. 

\sloppy Calinescu \etal~\cite{CaliCPV07} also considered the above primal as an extension of a submodular function. In particular, given any $\vec{x}\in [0,1]^n$, 
they define $f^*(\vec{x}) := \min_{S\subseteq V} \left(f(S)  +  \sum_{j\in V} f_S(j)x_j\right)$. Note that the primal $\lp$ is precisely $\max_{\vec{x}: \sum_{j\in V} x_j = k} f^*(\vec{x})$.
Indeed, the main motivation of~\cite{CaliCPV07} was when $\vec{x}$ lay in the independent set polytope of a matroid, and they describe a randomized algorithm\footnote{For the cardinality constrained case this is much simpler and described in~\Cref{sec:notes-on-primal}.} which takes such a point $\vec{x}$ and produces an independent subset $R$ with $\Exp[f(R)] \geq (1-1/e)\cdot f^*(\vec{x})$. This proves that the integrality gap of the LP is at most $(1-1/e)$. Furthermore, it can be shown\footnote{in their analysis, one can replace $\OPT$, the value of the optimal integral solution, by $f(\vec{x^*})$ and get this result; we leave these details out from this version which is restricted to the cardinality case} that the randomized solution $R$ returned by continuous-greedy algorithm from their journal version~\cite{CaliCPV11} 
also satisfies $\Exp[f(R)] \geq (1-1/e)\cdot \lp$.

On the other hand, the primal LP is not a computationally amenable object. Calinescu~\etal~\cite{CaliCPV07} (Theorem 3) assert that {\em evaluating} $f^*(\vec{x})$ is an NP-hard problem even when the submodular function is a coverage-type function. In~\Cref{sec:notes-on-primal}, we describe a submodular function $f$ and a vector $\vec{x}$
where it takes exponentially many calls to the submodular function to evaluate $f^*(\vec{x})$. In sum, we do not expect efficient algorithms to compute $\lp$ and thus cannot use it to compute a  certified upper bound on the optimum value for a given instance.

\paragraph{The Dual.} 
The dual program has a variable $\alpha$ corresponding to \eqref{eq:p1} and a variable $\tau_S$ for every subset $S\subseteq V$ corresponding to \eqref{eq:p2}. Note that by weak duality {\em any} 
feasible dual solution provides an upper bound on the optimum value. 
We desire fast algorithms which return a dual feasible solution with as small a value to get good upper bounds on the optimum value.
In the next section we describe a primal-dual algorithm which returns a feasible pair of solutions to both linear programs
with the guarantee that their values are within $(1-1/e)$. 

Note that \eqref{eq:d2} and \eqref{eq:d3} imply that $\tau$'s form a probability distribution over the subsets of $V$, and we abuse notation $S\sim \tau$ to denote a draw of a subset from this distribution. Using this, we note that $\sum_{S \subseteq V} \tau_S f(S) = \Exp_{S\sim \tau} [f(S)]$, the expected value of the submodular function wrt $\tau$. We use $\gamma^{(\tau)}$ to denote this, and henceforth lose the superscript due to brevity.
For any element $j\in V$, we note that $\sum_{S\subseteq V} \tau_S f_S(j) = \Exp_{S\sim \tau} [f_S(j)]$ is the expected marginal value of $j$ wrt $\tau$. We use $\beta^{(\tau)}_j$ to denote this, 
and henceforth lose the superscript due to brevity. Now note that \eqref{eq:d1} implies that once $\tau$ is fixed, the parameter $\alpha = \max_{j\in V} \beta_j$, that is, $\alpha$ is the maximum expected marginal of an element. Thus, once we fix the distribution $\tau$, the dual objective $\dual(\tau)$ is the sum of the expected value of $f(S)$ plus $k$ times the {\em maximum} expected marginal of an element.
Therefore, we can succinctly write the dual LP as
\begin{equation}\label{eq:dual-reinter}
\ipco{\small}
	\dual(\tau) =   k~\cdot \underbrace{\max_{j\in V} ~~~\underbrace{\left(\Exp_{S\sim \tau} [f_S(j)]  \right)}_{=\beta_j}}_{=\alpha} ~~~~+~~~ \underbrace{\Exp_{S\sim \tau} [f(S)]}_{=\gamma}    ;~~~~~~~ \dual = \min_{\tau~:~\text{distr. on } S\subseteq V} ~\dual(\tau)
\end{equation}

\arxiv{To repeat, for every distribution $\tau$ the value $\dual(\tau)$ is an upper bound on the optimum value. A trivial example is the ``point-distribution'' which puts all its mass on the empty set, that is, $\tau_\emptyset = 1$
and $\tau_S = 0$ for all $S\neq \emptyset$. Then, the dual objective becomes $k\cdot \max_{j} f(\{j\})$ which is a trivial upper bound on $\opt$. Next consider the greedy algorithm's solution $S$ and consider the dual solution putting it all its mass on $S$, that is, $\tau_S = 1$. Using submodularity and the greedy property, one readily observed that $f(S) \geq k\cdot \max_j f_S(j)$, and thus, the dual objective is at most $2f(S)$ proving that the greedy is at least $1/2$-approximate.
It is, in fact, not too difficult 
to construct 
a distribution $\tau_{\mathsf{Greed}}$ such that the value of the greedy solution is {\em at least} $(1-1/e)\cdot \dual(\tau_{\mathsf{Greed}})$. We include a discussion of such ``dual fitting'' results in~\Cref{sec:dual-fitting}.}

%
%

\subsection{Primal-Dual Algorithm}

In this section we describe a simple deterministic algorithm which simultaneously returns a set $\ALG$ with $k$ elements, and a distribution $\tau$ over sets such that $f(\ALG)\geq (1-1/e)\cdot \dual(\tau)$. 
We begin with $\tau$ being the point distribution with all it mass on the empty set (and thus $\dual(\tau)$ is simply $k$ times the largest singleton value). The algorithm proceeds in $k$-rounds and each round has two phases. In the first phase, we try to modify the distribution $\tau$ to {\em decrease} the dual objective. This will be done in a specific fashion which we describe shortly.
If we are unable to decrease the dual objective, we move to the second phase of making a primal pick:
 we select an element with the largest expected {\em marginal} increase, that is, the largest $\beta_j = \Exp_{S\sim \tau} [f_S(j)]$ into our solution and move to the next round. A feature of our algorithm is that we continually decrease the dual objective value, and this probably explains why we obtain good dual solutions empirically.

Before describing our primal-dual algorithm, it may help to interpret the greedy algorithm in the above framework where the distributions $\tau$ change discretely.
First note that in the beginning when $\tau$ concentrates all its mass on the empty set, 
the $\beta_j$ for every element $j$ is precisely $f(\{j\})$. At this point, if we make a primal pick with respect to the largest $\beta_j$, then we would be indeed taking the first step which the greedy algorithm takes. For simplicity, let this element be called $e_1$. Now, suppose we completely transfer the mass of $\tau$ from the empty set to the set $\{e_1\}$. Note that $\beta_j$ now is precisely $f(\{j,e_1\}) - f(\{e_1\})$, and so the primal pick, say $e_2$, with respect to the largest $\beta_j$ would indeed be the next greedy step. However, consider what happens to $\dual(\tau)$ as one moves from $\tau^{(0)}$ which has all its mass on $\emptyset$ to the distribution $\tau^{(1)}$ which puts all the mass on $\{e_1\}$. Note that 
\arxiv{\[
\dual(\tau^{(0)}) = k\cdot f(\{e_1\})~~~\text{and}~~~\dual(\tau^{(1)}) = k\cdot (f(\{e_1,e_2\}) - f(\{e_1\})) + f(\{e_1\})
\]}
\ipco{$\dual(\tau^{(0)}) = k\cdot f(\{e_1\})$ and $\dual(\tau^{(1)}) = k\cdot (f(\{e_1,e_2\}) - f(\{e_1\})) + f(\{e_1\})$}
and it is not necessarily true that the dual objective {\em decreases}; indeed, if $e_1$ and $e_2$ are almost ``equivalent and independent'' (that is, $f(\{e_1,e_2\}) \approx f(e_1) + f(e_2)$ and $f(e_1) \approx f(e_2)$), then $\dual(\tau^{(1)})$ can be significantly larger. The idea behind our primal-dual algorithm is that instead of {\em discretely} jumping from $\tau^{(0)}$ which puts all its mass on $\emptyset$ to $\tau^{(1)}$ which puts all its mass on $\{e_1\}$, we do a smoother transition always trying to decrease the dual value. With this intuition set, we now proceed to give details of the algorithm.

As mentioned above, we grow our set $\ALG$ in rounds and this naturally leads to a {\em chain} $\calA$ of subsets beginning with the empty set, the first element that is picked, the first two that are picked, and so on. The support of our dual solution will always be the sets in this chain, and the current primal solution, $\ALG$, will be the maximal set in this chain. As illustrated in \eqref{eq:dual-reinter}, 
we use $\gamma :=  \sum_{S\in \calA} \tau_S f(S)$, $\beta_j := \sum_{S\in \calA} \tau_S f_S(j)$, and $\alpha := \max_{j\in V} \beta_j$, and $\dual = k\alpha + \gamma$. All these variables depend on $\tau$ and will change when $\tau$ changes. Finally, we maintain a subset of ``tight'' elements: $T := \{j\in V~:~\alpha = \beta_j\}$; these are the set of elements with the largest expected marginal value. We initially set  $\calA = \{\emptyset\}$ and $\tau_\emptyset = 1$. Note that all other variables are defined by these. 
In particular, $\gamma = 0$, $\beta_j = f(\{j\})$ for all $j\in V$, $\alpha = \max_{j\in V} f(\{j\})$, $T = \{j\in V:f(j) = \alpha\}$, and 
$\dual = k\alpha$.

We begin a round attempting to modify the distribution $\tau$ so as to decrease $\dual(\tau)$. We do so by trying to move mass from the smaller sets in the chain $\calA$ to the 
maximal set in the chain which, recall, is our current solution $\ALG$. If we fail to decrease the dual objective value, then we pick a particular element with the largest expected marginal with respect to $\tau$, that is $\beta_j = \Exp_{S\sim \tau}[f_S(j)]$, into the primal set, and add the set $\ALG\cup \{j\}$
to the chain $\calA$ and move back to the dual-progress phase. Before we precisely describe how the $\tau$'s are modified, let us attempt to qualitatively explain why 
this method is a good idea. 

 Consider what happens to the dual objective when we move from $\tau$ to $\tau'$ where some mass is moved to the maximal set $\ALG\in \calA$. 
The term $\gamma' = \Exp_{S\sim \tau'} [f(S)] \geq \gamma = \Exp_{S\sim \tau} [f(S)]$ just by {\em monotonicity} of the function $f$. So, the second term in the objective~\eqref{eq:dual-reinter} actually increases when we go from $\tau$ to $\tau'$.
On the other hand, {\em submodularity} 
dictates that for any element $j$, $\beta'_j = \Exp_{S\sim \tau'}[f_S(j)] \leq \beta_j$, and so $\alpha' \leq \alpha$. So, the first term in the objective~\eqref{eq:dual-reinter}, which is multiplied by $k$, decreases.
If this drop, $k\cdot(\alpha - \alpha')$, cannot cancel the increase $\gamma' - \gamma$, then it must be that there is a tight element $j$ whose marginal to $\ALG$ is 
``similar'' to the marginals of the smaller sets in $\calA$; put differently, $j$ doesn't exhibit the decreasing marginal utility (at least not to a large extent).
This prompts us picking $j$ into our solution, explaining our primal pick. We now give the precise details and hope this paragraph helps the reader. \smallskip

We describe the dual modification as a continuous-time process, and in particular, assert the rate of change of the $\tau_S$'s over time. This leads, by definition, to a change in the values of $\gamma$, $\beta_j$'s, and $\alpha$. Using a notation from physics, we denote these rates by adding a dot above the variable name.
The dual modification phase is run {\em only if} the dual decreases. 
\begin{itemize}
	\item We set $\dt{\tau_{\ALG}} = 1 - \tau_{\ALG}$ and $\dt{\tau_S} = - \tau_S$ for all other $S\in \calA \setminus \ALG$. That is, we try and increase the mass on the maximal set 
	$\ALG$ and decrease the mass on all the other ``inner'' sets. We do so in a ``multiplicative'' fashion, in that, the maginitude of the rate of change depends on the current mass of the set.
	As noted above, the change in $\tau$-variables leads to changes in all other variables which we discuss below.
	
	\item  Note that the $\gamma = \sum_{S \in \calA} \tau_S f(S)$ variable changes as follows.
	\begin{equation}\label{eq:gamma-change}
		\dt{\gamma} = (1 - \tau_{\ALG})f(\ALG) - \sum_{S\in \calA\setminus \ALG} \tau_S f(S) = f(\ALG) - \gamma
	\end{equation}			
Since $f$ is monotone, this quantity will be non-negative, and thus, as we explained earlier, we increase the $\gamma$ variable over time. 
	\item 
	Note that the $\beta_j = \sum_{S \in \calA} \tau_S f_S(j)$ variables change as follows.
	\begin{equation}\label{eq:beta-change}
		\forall j\in V,~~~ \dt{\beta_j} = (1 - \tau_{\ALG})f_\ALG(j) - \sum_{S\in \calA\setminus \ALG} \tau_S f_S(j) =  f_\ALG(j) - \beta_j
	\end{equation}
	Since $f$ is submodular $f_S(j) \geq f_\ALG(j)$ for all $S\in \calA$ since $\calA$ is a chain, and thus, as we explained earlier, $\dt{\beta_j}$ will be non-positive. That is, these variables decrease in value.
	
	\item The $\alpha$ variables changes as follows.
	\begin{equation}\label{eq:alpha-change}
		\dt{\alpha} = \max_{j\in T} \dt{\beta_j}
	\end{equation}
	The rate of change of $\alpha$ is dictated by the highest rate of $\beta_j$ change among the tight elements. Again, when $f$ is submodular these $\dt{\beta_j}$'s are negative, and thus we are going to decrease the value of $\alpha$. 
	However, note that as we keep changing $\alpha$'s and $\beta_j$'s, the set $T$ may change, and in particular a new element may appear in $T$ and lead to a different $\dt{\alpha}$. We need to keep track of this. 
	
	\item Once $\dt{\gamma}$ and $\dt{\alpha}$ are defined, we see that $\dt{\dual} = k\dt{\alpha} + \dt{\gamma}$. We only do the changes to the previous steps as long as $\dt{\dual} < 0$. Otherwise, we stop this dual-modification phase. At the beginning, note that $\dt{\gamma} = 0$, $\dt{\beta_j} = 0$, and $\dt{\alpha} = 0$ implying $\dt{\dual} = 0$. Therefore, the dual-modification phase does not run at the beginning, and we move to the primal-choice phase. 
\end{itemize}

\noindent
When we cannot decrease the dual objective, we make a primal pick. As mentioned above, we consider the elements with the largest expected marginal, and in particular
this is the set of tight elements $T = \{j\in V~:~\beta_j = \alpha\}$. There can be multiple elements in this set $T$, and we pick the element $j$ with the {\em largest}
$\dt{\beta_j} = f_{\ALG}(j) - \beta_j$; that is, the element $j$ whose rate of change was the same as the rate of change of $\alpha$.
Note that for any two $j$ and $j'$ in $T$, we have $\beta_j = \beta_{j'} = \alpha$, so an alternate way would be to say we pick ``greedily'' from $T$: we pick 
the $j\in T$ with the largest $f_{\ALG}(j)$. Once we identify $j$, we add it to $\ALG$, and this new $\ALG$ into the chain $\calA$, and move back to the dual-modification phase.
In the new dual modification phase, the algorithm would try to raise the $\tau$ value of the new set $\ALG$ which currently has no $\tau$-mass on it. 
The above two dual-primal processes continue till we pick $k$ elements in $\ALG$ at which point we terminate. 
\ipco{Due to space restrictions, we describe the full pseudocode and an illustrative example in~\Cref{sec:illus-ipco}.}
\arxiv{The full pseudocode of the algorithm is described below. 
		\begin{center}
			\footnotesize{
			\fbox{%
				\begin{varwidth}{\dimexpr\linewidth-2\fboxsep-2\fboxrule\relax}
						\begin{algorithmic}[1]
							\Procedure{MSM Primal-Dual}{$f,V,k$}: 
							\State $i\eq 1$; $\calA^{(i)} \gets \{\emptyset\}$; $\ALG^{(i)} \leftarrow \emptyset$; $\tau_\emptyset \gets 1$; $\gamma \gets 0$
							\State $\forall j \in V, \beta_j \gets f(\{j\}), \dt{\beta}(j) \gets 0$; $\alpha \gets \max_{j \in V}f(j)$; $\dt{\alpha} \gets 0$; $\dt{\gamma} \gets 0$
							\State $T \gets \{{j \in V \mid \beta_j = \alpha}\}$ 
							\While{$i < k+1$}
							\While{$k\dt{\alpha} + \dt{\gamma} < 0$} 
							\LineComment{\sc Make Dual Progress}
							\State $\dt{\tau_{\ALG^{(i)}}} \eq 1 - \tau_{\ALG^{(i)}}$. \label{alg:mdp-first}
							\State $\dt{\tau_{S}} \eq - \tau_{S}$, $\forall S\in \calA^{(i)}\setminus \ALG^{(i)}$.
							\State $\dt{\beta}_j \eq f_{\ALG^{(i)}}(j) - \beta_j$, $\forall j\in V$. 
							\State $\dt{\gamma} \gets f(\ALG^{(i)}) - \gamma$
							\State $T \gets \{{j \in V \mid \beta_j = \alpha}\}$ \Comment{$j$'s may leave and enter $T$ as we change $\alpha$, $\beta_j$'s}
							\State $\dt{\alpha} \gets \max_{j\in T}(\dt{\beta}_j)$ \label{alg:mdp-last}
							\EndWhile
							\LineComment{\sc Make Primal Pick}
							\State $p_i \eq \arg \max_{j\in T} \dt{\beta_j}$; $\ALG^{(i+1)} \eq \ALG^{(i)} \cup p_i$;  $\calA^{(i+1)} \eq \calA^{(i)} \cup \ALG^{(i+1)}$ \label{alg:16}
							\State $\dt{\gamma} \eq f(\ALG^{(i+1)}) - \gamma$. \label{alg:17}
							\State $\dt{\beta_j} \eq f_{\ALG^{(i+1)}}(j) - \beta_j$, $\forall j\in V$. 
							\State $\dt{\alpha} \eq \max_{j\in T} \dt{\beta_j}$; $i\eq i+1$. \label{alg:19}
							\EndWhile
							\State \Return $\ALG^{(k+1)}$, $(\alpha, \tau_S~:~S\in \calA^{(k+1)})$.
							\EndProcedure
						\end{algorithmic}					
				\end{varwidth}%
			}
		}
		\end{center}
		As written, lines~\ref{alg:mdp-first} to~\ref{alg:mdp-last} continuously checks the while-loop condition. In~\Cref{sec:discrete} we describe how to implement this efficiently.
		In short, the only times we need to be careful is when a new element enters the set $T$ and these ``times'' can be found via solving simple ordinary differential equations. }
 \arxiv{	\subsubsection*{Illustrative Example} \label{subsec:illexample}
We provide an example which illustrates the working of the above algorithm and also use this to show how the above differs from the greedy algorithm.
The submodular function in this example is a coverage function. The universe is $\{A,B,C,D\}$ which correspond to sets described in the picture below, and the parameter $k = 3$.
Given a subset $S\subseteq \{A,B,C,D\}$, the function $f(S)$ is the number of points covered by the sets in $S$. So, for instance, $f(\{A\}) = 3$ and $f(\{D\}) = 2$, 
$f(\{A,B,D\}) = 7$, and $f(\{A,C,D\}) = 6$. 

\begin{figure}[H]
	\centering
	\includegraphics[width=0.6\textwidth]{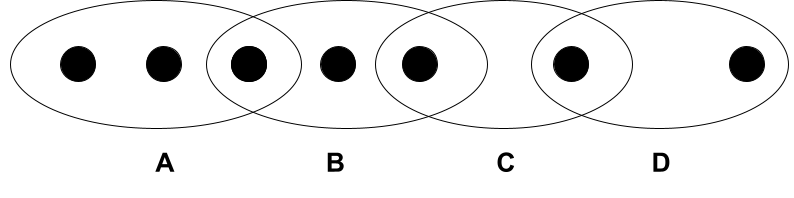}
	\label{fig:pd-better}
\end{figure}
Let us first consider what the greedy algorithm picks. Assuming we break ties adversarially\footnote{That is, we pick $A$ then $C$ then $D$; this ordering for greedy can be accomplished with a slight modification on the weight of $B$.}, the greedy algorithm's sets are $\{\emptyset, \{A\}, \{A,C\}, \{A,C,D\}\}$, which gives a final value of $6$. The optimum solution in this instance is $\{A,B,D\}$ giving an $f$-value of $7$.

\arxiv{Before we go to our algorithm, let us consider the dual solutions which put all their masses on the $4$ sets above. For instance if all the mass is put on the empty set, the dual value is $k$ times the maximum marginal which is $3\times 3 = 9$. Similarly, if $\tau$ puts the full mass on $\{A\}$, we get $\dual(\tau) = 3\times 2 + 3 = 9$ where the ``$2$'' arises because this is $C$'s marginal over $\{A\}$. Thus, the dual doesn't decrease (as we note in the second paragraph of the previous section, the dual could have even increased). The dual does decrease if 
	$\tau$ puts all the mass on $\{A,C\}$: we get $\dual(\tau) = 3\times 1 + 5 = 8$. And so, the ``dual-fitting'' of the greedy solution cannot give a dual solution less than $8$.}

Now let us understand what the primal-dual algorithm does. Observe that the first pick of the primal-dual algorithm is the same as the greedy algorithm, in this case, the element $A$ is picked in the solution $\ALG$. At this time, the chain is $\{\emptyset, \{A\}\}$. The dual now tries to move some mass from the empty set onto the set $\{A\}$.
We first observe that this step fails in increasing the dual. To see this, note that if we move to $\tau'$ which puts $(1-\eps)$ mass on $\emptyset$ and $\eps$ on $\{A\}$, 
for the element $B$, we get $\Exp_{S\sim \tau'}[f_S(B)] = 3\cdot(1-\eps) + 2\cdot \eps = 3 - \eps$. Furthermore, $\Exp_{S\sim \tau'}[f(S)] = 0\cdot (1-\eps) + 3\cdot \eps = 3\eps$.
And thus, the dual value is, in fact, $3\times  (3-\eps) + 3\eps = 9$ implying the dual doesn't decrease with such a move. So, the primal algorithm picks the element $j$ with the largest $\Exp_{S\sim \tau} [f_S(j)]$.
Note, however, that $\tau$ is still the dual with all its mass on $\emptyset$, and thus the primal-dual algorithm picks the element $B$ {\em instead} of element $C$ (which greedy picks).
This is where, for this example, the primal-dual algorithm and the greedy algorithm differ. \smallskip

Let's see what occurs next. Now, the chain is $\{\emptyset, \{A\}, \{A,B\}\}$. Now let's see what happens when we move $\eps$-mass from $\emptyset$ to $\{A,B\}$.
First, we see that $\Exp_{S\sim \tau'}[f(S)] = 0\cdot (1-\eps) + 5\eps = 5\eps$. Now let's see how $\Exp_{S\sim \tau'} [f_S(j)]$ behaves. We see that
\[
\Exp_{S\sim \tau'}[f_S(j)] = \begin{cases}
	3\cdot(1-\eps) + 0\cdot \eps = 3 - 3\eps& \text{for}~j = A \\
	3\cdot(1-\eps) + 0\cdot \eps = 3 - 3\eps & \text{for}~j = B \\
	2\cdot(1-\eps) + 1\cdot \eps = 2 - \eps & \text{for}~j = C \\
	2\cdot(1-\eps) + 2\cdot \eps = 2 & \text{for}~j = D \\
\end{cases}
\]		
And so, the dual value is $3 \times \max(3-3\eps, 2) + 5\eps$, and indeed this {\em does} drop below $9$ for small $\eps$: it drops till $\eps = 1/3$ and the dual reaches the value $7\frac{2}{3}$, and at that point the element $D$ enters the tight set. Note, we have obtained a dual solution whose objective is better than the ``greedy dual''. Indeed, our experiments in~\Cref{sec:experiment} will show that the dual our algorithm generates is often much smaller than what the greedy dual is.
Coming back to this instance, our algorithm makes the primal picks $D$ and gives, for this instance, the optimum solution which is $\{A,B,D\}$. The algorithm terminates.

We finish this discussion by noting that, although we do not add it, one could apply another round of ``make-dual progress'' after picking $k$ elements. For this particular instance,
we could think what occurs when we move mass from $\emptyset$ and $\{A,B\}$ to the set $\{A,B,D\}$; turns out, we can move all the mass on the set $\{A,B,D\}$ and this gives a dual value of $7$
which equals the primal.}

%
%
%
%
		
		\subsection{Analysis}
		
		The final $(\alpha, \tau_S)$'s form a feasible dual with objective value equal to $\dual = k\alpha + \gamma$. We now show that $f(\ALG^{(k+1)}) \geq \eta_k \cdot (k\alpha + \gamma)$ where $\eta_k = 1 - \left(1 - \frac{1}{k}\right)^k$. 
		\begin{mdframed}[backgroundcolor=gray!20,topline=false,bottomline=false,leftline=false,rightline=false] 
			\begin{theorem}
				The {\sc MSM Primal-Dual} algorithm returns a set $\ALG^{(k+1)}$ and a feasible dual solution $\tau$ with objective $\dual$ with the property
				\[
				f(\ALG^{(k+1)}) \geq \left(1 - \left(1 - \frac{1}{k}\right)^k \right) \cdot \dual
				\]
			\end{theorem}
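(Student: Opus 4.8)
The plan is to derive the statement from the analytic Claim~\ref{clm:ana} applied with $x^{(i)} := f(\ALG^{(i)})$ for $1\le i\le k+1$ (note $x^{(1)} = f(\emptyset) = 0$) and $z := \dual$, the value of the dual solution the algorithm returns. So the whole proof reduces to establishing the per-round recurrence
\[
f(\ALG^{(i+1)}) \;\ge\; \Bigl(1-\tfrac1k\Bigr) f(\ALG^{(i)}) \;+\; \tfrac1k\,\dual \qquad \text{for every } 1\le i\le k,
\]
after which Claim~\ref{clm:ana} gives $f(\ALG^{(k+1)})\ge\eta_k\cdot\dual$. First, though, I would record that the returned pair $(\alpha,\tau)$ is a feasible dual of objective $k\alpha+\gamma$: the modification rates satisfy $\sum_{S\in\calA}\dt{\tau_S} = (1-\tau_{\ALG}) - \sum_{S\neq\ALG}\tau_S = 0$, so $\sum_S\tau_S$ stays $1$, giving \eqref{eq:d2}; the multiplicative rule $\dt{\tau_S} = -\tau_S$ keeps every $\tau_S\ge 0$, giving \eqref{eq:d3}; and since the algorithm always maintains $\alpha=\max_{j\in V}\beta_j$, \eqref{eq:d1} holds for all $j$.

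For the recurrence, fix a round $i$ and freeze time at the moment its dual-modification phase ends and the primal pick $p_i$ is made; write $\gamma,\beta_j,\alpha$ for the current values, so the current dual objective is $\dual_i := k\alpha+\gamma$. The phase stopped because the guard $k\dt{\alpha}+\dt{\gamma}<0$ is now false, i.e.\ $k\dt{\alpha}+\dt{\gamma}\ge 0$, where — with the rates taken relative to pushing mass onto the maximal chain set $\ALG^{(i)}$ — we have $\dt{\gamma} = f(\ALG^{(i)})-\gamma$ by \eqref{eq:gamma-change} and $\dt{\alpha} = \max_{j\in T}\dt{\beta_j} = \dt{\beta_{p_i}}$ by \eqref{eq:alpha-change}, the last equality because $p_i$ is exactly the tight element attaining this maximal rate. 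Using \eqref{eq:beta-change}, $\dt{\beta_{p_i}} = f_{\ALG^{(i)}}(p_i)-\beta_{p_i}$, and $\beta_{p_i}=\alpha$ since $p_i$ is tight; substituting, $k\bigl(f_{\ALG^{(i)}}(p_i)-\alpha\bigr) + \bigl(f(\ALG^{(i)})-\gamma\bigr)\ge 0$, i.e.\ $f_{\ALG^{(i)}}(p_i)\ge \alpha - \tfrac1k\bigl(f(\ALG^{(i)})-\gamma\bigr)$. Adding $f(\ALG^{(i)})$ to both sides and using $f(\ALG^{(i+1)}) = f(\ALG^{(i)}) + f_{\ALG^{(i)}}(p_i)$ gives $f(\ALG^{(i+1)})\ge (1-\tfrac1k)f(\ALG^{(i)}) + \tfrac1k(k\alpha+\gamma) = (1-\tfrac1k)f(\ALG^{(i)})+\tfrac1k\dual_i$. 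Finally, the dual objective is non-increasing throughout the run — it is unchanged at a primal pick (the newly added set $\ALG^{(i+1)}$ carries zero mass) and strictly decreases during any dual-modification phase (which runs only while $\dt{\dual}<0$) — so $\dual_i\ge\dual$, and the recurrence follows with $\dual$ in place of $\dual_i$.

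The step I expect to be the main obstacle is making the continuous-time dual-modification phase rigorous: one must argue it always terminates and that at the terminal instant the dotted quantities genuinely satisfy $k\dt{\alpha}+\dt{\gamma}\ge 0$ with $\dt{\alpha}$ realized by the selected element $p_i$, while accounting for the fact that the tight set $T$ can gain elements as $\alpha$ and the $\beta_j$'s fall. Termination is clean: as $\tau_{\ALG}\to 1$ the other masses decay to $0$, so $\dt{\gamma} = f(\ALG)-\gamma\to 0$ and $\dt{\beta_j} = f_{\ALG}(j)-\beta_j\to 0$ for every $j$, hence $\dt{\dual}\to 0$ and the phase must halt by the time $\tau$ reaches the point mass on $\ALG$ (if not sooner); pinning down the finitely many times at which $T$ changes is what the paper defers to its discretization section, but for correctness it suffices that the phase ends with $\dt{\dual}\ge 0$. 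Everything else is the rearrangement above, the monotonicity of $\dual$, and the appeal to Claim~\ref{clm:ana}.
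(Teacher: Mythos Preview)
Your proof is correct and follows essentially the same approach as the paper: freeze time just before the $i$th pick, use the failure of the while guard to get $k\dt{\beta_{p_i}}+\dt{\gamma}\ge 0$, substitute $\dt{\beta_{p_i}}=f_{\ALG^{(i)}}(p_i)-\alpha$ and $\dt{\gamma}=f(\ALG^{(i)})-\gamma$, rearrange to the recurrence, and finish with Claim~\ref{clm:ana} after noting the dual only decreases. Your additional remarks on dual feasibility and on termination of the continuous phase are not in the paper's proof (the latter is deferred to the discretization section) but are correct and do not change the route.
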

		\end{mdframed}
		\begin{proof}
			To this end, fix time just {\em before} the $i$th item $p_i$ is picked, 
			In what follows, all the other variables and their rates are precisely at this time.
			Let $\Delta^{(i)}$ refer to the increase in $f(\cdot)$ from picking $p_i$, that is, 
			\[
			\Delta^{(i)} := f_{\ALG^{(i)}} (p_i)
			\] 
			Now observe that 
			\begin{equation}\label{eq:main}
				f_{\ALG^{(i)}}(p_i) \geq \beta_{p_i} - \frac{\dt{\gamma}}{k}
			\end{equation}
			The reason is $p_i = \arg \max_{j\in T} \dt{\beta_j}$ and $\dt{\alpha} = \max_{j\in T} \dt{\beta_j}$. Therefore, $k\dt{\beta_{p_i}} + \dt{\gamma} = k\dt{\alpha} + \dt{\gamma} \geq 0$, since
			we couldn't decrease dual anymore. Now using \eqref{eq:beta-change}, we know that $\dt{\beta_{p_i}} = f_{\ALG^{(i)}}(p_i) - \beta_{p_i}$. Rearranging gives us \eqref{eq:main}. 
			
			Next, we use the fact that $p_i \in T$ to deduce $\beta_{p_i} = \alpha$, and use \eqref{eq:gamma-change} to note $\dt{\gamma} = f(\ALG^{(i)}) - \gamma$. Substituing above, we get
			\[
			f(\ALG^{(i+1)}) - f(\ALG^{(i)}) = f_{\ALG^{(i)}}(p_i)  \geq \alpha - \frac{f(\ALG^{(i)} - \gamma)}{k} 
			\]
			rearranging which implies
			\begin{equation}
				f(\ALG^{(i+1)}) \geq \left(1 - \frac{1}{k}\right) f(\ALG^{(i)}) + \frac{1}{k}\cdot \dual
			\end{equation}
			Since $\dual$ just before $i$th item is picked is only at least the final dual objective (since we only decrease dual).~\Cref{clm:ana} gives us the theorem.
		\end{proof}
\arxiv{
\subsection{Discretization and Reformulation}\label{sec:discrete}
		
		In this step we describe the discrete reformulation of the continuous algorithm described above, especially the lines~\ref{alg:mdp-first} to~\ref{alg:mdp-last}. To this end, 
		suppose we have just added $p_i$ to $\ALG^{(i)}$ in Line~\ref{alg:16}, and the resulting $\dt{\gamma}$ and $\dt{\alpha}$ calculated in Line~\ref{alg:17} and Line~\ref{alg:19}, respectively,
		satisfy $k\dt{\alpha} + \dt{\gamma} < 0$. This is when we enter the inner while-loop and run lines~\ref{alg:mdp-first} to~\ref{alg:mdp-last}.
		Let this be $t=0$ and let all variables and their rates be in indexed by this time, that is, $\alpha(0) = \alpha$, $\dt{\alpha}(0) = \dt{\alpha}$, and so on and so forth.
		We also let $T(0)$ denote the tight set at this point. 

		We now explain evolution of these various variables as a function of $t$ while we are in the inner while-loop. Since $\calA$ and $\ALG$ don't change in the inner while loop all the rates behave in a predictable manner which will allow us to ``shortcut'' the inner while loop. We discuss this, and then describe the discrete algorithm.
		
		For brevity's sake, we use $F:= f(\ALG)$ and $F_j := f_{\ALG}(j)$ for all $j\in V$. The variable $\dt{\beta}_j(t)$ behaves as per the ordinary differential equation
		\begin{equation}\label{eq:beta-evol}
                \ipco{\small}
			\forall j \in V,~~~ \dt{\beta_j}(t) = F_j - \beta_j(t) ~~\Rightarrow~~~~~ \beta_j(t) = \beta_j(0)\cdot e^{-t} + F_j\cdot (1-e^{-t}) ~~\textrm{and}~~ \dt{\beta}_j(t) = \dt{\beta}_j(0)\cdot e^{-t}
		\end{equation}
		Similarly, we get the evolution of $\gamma(t)$ and $\dt{\gamma}(t)$ as 
		\[
		\dt{\gamma}(t) = F - \gamma(t)~~\Rightarrow~~~~~ \gamma(t) = \gamma(0)\cdot e^{-t} + F\cdot (1-e^{-t})~~~\textrm{and}~~~ \dt{\gamma}(t) = \dt{\gamma}(0)\cdot e^{-t}
		\]
		That is, the $\beta_j$ (and $\gamma$) variables are a convex combinations of $F_j$ (of $F$) and their old values. Crucially note that the ratio $\dt{\beta}_j(t)/\dt{\gamma}(t)$ is unchanged over time
		and is the original $\dt{\beta_j}(0)/\dt{\gamma}(0)$.
		
		Next, note that the value of $\dt{\alpha}(t) = \max_{j\in T(t)} \dt{\beta}_j(t)$ and thus equals $\dt{\beta}_{j_t}(t)$ for some ${j_t}\in T(t)$. Therefore, while $T(t)$ remains the same,
		the rate $\dt{\alpha}(t) = \dt{\beta}_{j_t}(t)$, and thus using the previous paragraphs argument, $\dt{\alpha}(t)/\dt{\gamma}(t)$ remains the same. In particular, till $T(t)$ changes,
		the while condition holds.
		
		Now note that until a new element is {\em added} to $T(t)$, the limiting $j_t$ never leaves $T(t)$, since $\beta_{j_t}(t)$'s rate of change and $\alpha$'s rate of change is the same as $\alpha(t)$'s rate of change.  Therefore, $\dt{\alpha}(t)$ changes {\em only} when a new element $\ell\notin T(t)$ is added to $T(t)$ whose $\dt{\beta}_\ell(t) > \dt{\beta}_{j_t}(t)$. At that instant, the value of $\dt{\alpha}(t)$ is set to $\dt{\beta}_\ell(t)$. At this point, we may have to check the while-loop condition.

		But we can {\em pre-compute} the set of element $\ell$ which would lead to a violation of the while-loop condition. This is again because the ratio $\dt{\beta}_\ell(t)/\dt{\gamma}(t)$ remains the same throughout. In particular, define the set
		\[
		W := \{j \in ~V~: k\dt{\beta}_j(0) + \dt{\gamma}(0) \geq 0\} ~\underbrace{=}_{\text{constant $\dt{\beta}_j(t)/\dt{\gamma(t)}$}}~~~ \{j \in ~V~: k\dt{\beta}_j(t) + \dt{\gamma}(t) \geq 0\}
		\]
		where $\beta_j$ and $\gamma$ are values at time $0$. For a sanity check, note that $W$ doesn't contain a tight element since $\dt{\alpha} = \max_{j\in T} \dt{\beta}_j$ and if there was an intersection we would violate the inner while loop condition. The discussion preceding the set definition leads us to conclude that the inner while loop terminates the first time $t$ when an element in the set $W$ enters the set $T(t)$. Next, let's figure out the time when an element $\ell$ not in the tight set enters it. \smallskip
		
		Fix an element $\ell \notin T(0)$ and let $\dt{\alpha}(0) = \dt{\beta}_{j_0}(t)$ for $j_0 \in T(t)$. If these were the only two elements, then we can find the $t$ at which $\ell$ would enter 
		$T(t)$ by simply using \eqref{eq:beta-evol} on $\ell$ and $j_0$. Namely,
		\[
		\beta_\ell(t)e^{-t} + F_\ell(1- e^{-t}) = \beta_{j_0}(t) e^{-t} + F_{j_0}(1 - e^{-t}) ~~\Rightarrow ~~~ t_\ell  = \ln\left(1 + \frac{\beta_{j_0} - \beta_\ell}{F_\ell - F_{j_0}}\right)
		\]
		
		In particular, the first $\ell \notin T(0)$ which enters $T(t)$ is the element with minimum $t_\ell$, namely
		\begin{equation}
			\ell = \arg \max_{j\notin T(0)} \left(\frac{F_{j} - F_{j_0}}{\beta_{j_0} - \beta_j}\right) ~~~=:j_1 \label{eq:discrete-ell}
		\end{equation}
		So, $\ell$ is the element whose marginal increase on $\ALG$ is larger than $j_0$'s, but the algorithm shades this by the denominator which is the difference of the $\beta$-values.
		At time $t_\ell$, this element $\ell$ enters the tight set at time $t = t_\ell$ and, as discussed before, $\dt{\alpha}(t)$ increases to $\dt{\beta}_\ell(t)$. If $\ell \in W$, the inner while loop terminates. Otherwise, we again use equation \ref{eq:discrete-ell} to find the next element to become tight with the only change being that $\ell$ takes the value of ``$j_0$'' as the element which dictates $\dt{\alpha}_t$. This process continues till we find a $W$-element. We write the pseudocode below.
		
		\begin{center}
			\fbox{%
				\begin{varwidth}{\dimexpr\linewidth-2\fboxsep-2\fboxrule\relax}
						\begin{algorithmic}[1]
							\Procedure{Discrete Make Dual Progress}:
							\LineComment{Replaces Lines~\ref{alg:mdp-first} to~\ref{alg:mdp-last} in Continuous Algorithm} 
							\LineComment{Below, we use $F_j := f_{\ALG^{(t)}}(j)$ for all $j$ and $F := f(\ALG^{(t)})$ for brevity}
							\State $j^* \eq \arg \max_{j\in T} \dt{\beta_j}$.
							\State $\ell \eq \arg \max_{j\in V\setminus T} \left(\frac{F_j - F_{j^*}}{\beta_{j^*} - \beta_j}\right)$ \Comment{Will enter $T$ next} \label{alg:5}
							\State $t \eq \ln\left(1 + \frac{\beta_{j^*} - \beta_\ell}{F_\ell - F_{j^*}}\right)$
							\State $\forall j \in V, \beta_j \eq \beta_j e^{-t} + F_j(1-e^{-t})$; $\dt{\beta_j} \eq F_j - \beta_j = \dt{\beta_j}e^{-t}$.
							\State $\gamma \gets \gamma e^{-t} + F(1 - e^{-t})$; $\dt{\gamma} \eq F - \gamma = \dt{\gamma}e^{-t}$.
							\State $\alpha \eq \beta_\ell e^{-t} + F_\ell (1-e^{-t})$.
							\State $T \gets \{{j \in V \mid \beta_j = \alpha}\}$ \Comment{$\ell$ enters $T$}
							\State If $\ell \in W$, then {\bf break}, else $j^* \eq \ell$ and go to line~\ref{alg:5}
							\EndProcedure
						\end{algorithmic}					
				\end{varwidth}%
			}
		\end{center}

		\paragraph{Query and Runtime Analysis.} 
		The number of queries is clearly $O(nk)$ since to make the algorithm run we need $f_{\ALG^{(t)}}(j)$ for all $j\in V$, and there are $k$ different sets.
		The running time is dominated by {\sc Discrete Make Dual Progress}. Naively, to find the element $\ell$ in Line~\ref{alg:5}, one can take $O(n)$ time, and 
		since any element can enter $T$ only once, this gives an $O(n^2)$ time for this procedure, and a total time of $O(kn^2)$ overall. 
		It is possible that techniques for speeding up the greedy algorithm as in~\cite{Minoux77,BadaV14,WeiIB14,MirzBKVK15,BreuBS20} may be applicable to speed the above algorithm as well, 
		and we leave this as an open direction. 
  
\subsection{A Dual-Fitting Analysis of the Greedy Algorithm}\label{sec:dual-fitting}
In this subsection, we show a simple ``dual-fitting'' proof that the greedy algorithm achieves a $\eta_k$-approximation. 
Recall, the greedy algorithm solves
$\max_{S:|S|=k} f(S)$ by picking elements in $k$ iterations. 
At the beginning of the $t$th iteration, it has a set $S_t$ of size $t-1$, with $S_1$ initialized to $\emptyset$.
At the $t$th iteration, the algorithm picks element $j_t$ with largest $f_{S_t}(j)$ and sets $S_{t+1} = S_t ~\cup~ j_t$.
The returned solution is $S_{k+1}$. 
We now describe a ``dual-fitting'' analysis of the greedy algorithm. We first describe $k$ different feasible dual solutions and show that $f(S_{k+1})$ is precisely $\eta_k$ times a certain weighted average, 
where $\eta_k = 1 - (1-1/k)^k$. This would prove 
\begin{theorem}
	$f(S_{k+1}) \geq \eta_k \cdot \dual$.
\end{theorem}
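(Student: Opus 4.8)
The plan is to dual-fit the greedy chain $S_1 \subseteq S_2 \subseteq \cdots \subseteq S_{k+1}$ using, for each $t \in \{1,\dots,k\}$, the \emph{point-mass} dual solution $\tau^{(t)}$ that puts all its weight on the greedy set $S_t$, and then to take a convex combination of these $k$ duals with weights tuned so that the averaged objective equals exactly $f(S_{k+1})/\eta_k$. Write $\Delta_t := f(S_{t+1}) - f(S_t) = f_{S_t}(j_t)$ for the greedy gain in round $t$, so that $f(S_{k+1}) = \sum_{t=1}^k \Delta_t$ since $f(\emptyset) = 0$.

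First I would read off $\dual(\tau^{(t)})$ from the re-interpretation \eqref{eq:dual-reinter}: with $\tau^{(t)}$ concentrated on $S_t$ we get $\beta_j = f_{S_t}(j)$ for all $j$, hence $\alpha = \max_j \beta_j = f_{S_t}(j_t) = \Delta_t$ by the greedy selection rule, and $\gamma = f(S_t)$. Thus $\tau^{(t)}$ is dual-feasible and has objective $\dual(\tau^{(t)}) = k\Delta_t + f(S_t)$; moreover, telescoping along the chain gives $f(S_t) = \sum_{s=1}^{t-1} \Delta_s$.

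Next I would choose weights $w_t := \frac{(1-1/k)^{k-t}}{k\,\eta_k}$ for $t = 1,\dots,k$. A one-line geometric sum shows $w_t > 0$ and $\sum_{t=1}^k w_t = 1$, so $(w_t)_t$ is a probability vector. Forming the weighted average and swapping the order of summation in the term $\sum_t w_t f(S_t) = \sum_t w_t \sum_{s<t}\Delta_s$, the coefficient of each $\Delta_t$ in $\sum_{t=1}^k w_t\,\dual(\tau^{(t)})$ becomes $k w_t + \sum_{s>t} w_s$. The weights are engineered so that this equals $1/\eta_k$ for every $t$ — this is the single identity one must verify, and it is again a geometric sum (equivalently: the tail sums $W_t := \sum_{s \geq t} w_s$ satisfy $W_{k+1}=0$ and $W_t = (1-\tfrac1k)W_{t+1} + \tfrac{1}{k\eta_k}$, which solves to $W_t = \tfrac{1}{\eta_k}\bigl(1-(1-\tfrac1k)^{k-t+1}\bigr)$, so that $kw_t + W_{t+1} = 1/\eta_k$ and $W_1 = 1$). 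Consequently $\sum_{t=1}^k w_t\,\dual(\tau^{(t)}) = \tfrac{1}{\eta_k}\sum_{t=1}^k \Delta_t = f(S_{k+1})/\eta_k$, and since $\dual = \min_\tau \dual(\tau) \leq \dual(\tau^{(t)})$ for each $t$ while $\sum_t w_t = 1$, the weighted average is at least $\dual$; hence $f(S_{k+1}) = \eta_k \sum_t w_t\,\dual(\tau^{(t)}) \geq \eta_k\cdot\dual$.

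I expect the only genuinely non-routine part to be guessing the weights $w_t$ and checking the coefficient identity $k w_t + \sum_{s>t} w_s = 1/\eta_k$; everything else is the telescoping bookkeeping common to every $\eta_k$-analysis, and in fact this argument is just a ``distributional'' repackaging of \Cref{clm:ana} applied to the sequence $f(S_t)$.
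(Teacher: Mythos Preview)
Your proof is correct and is essentially identical to the paper's: the paper also takes the $k$ point-mass duals on $S_1,\dots,S_k$, defines weights $\theta_t$ by the recursion $\theta_t=(1-1/k)\theta_{t+1}$ with $\theta_k=1/(k\eta_k)$ (which are exactly your $w_t=(1-1/k)^{k-t}/(k\eta_k)$), verifies the same coefficient identity $k\theta_t+\sum_{s>t}\theta_s=1/\eta_k$, and concludes $\sum_t\theta_t\dual^{(t)}=f(S_{k+1})/\eta_k$. The only cosmetic difference is that you write the weights in closed form from the start while the paper backs them out from the recursion.
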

\begin{proof}
	The $k$ dual solutions correspond to the $k$-sets $S_1$ to $S_k$. 
	The $t$th dual feasible solution assigns a point-mass on the set $S_t$. So,
	\[
	\tau^{(t)}_{S_t} = 1~~\textrm{and all other $\tau^{(t)}_S = 0$}; ~~~ \alpha^{(t)} = \max_j f_{S_t}(j);~~~~~\dual^{(t)} = k\alpha^{(t)} + f(S_t)
	\]
	Crucially note by the design of the greedy algorithm:
	\[
	f(S_t) = \sum_{s=1}^{t-1} \alpha^{(s)}~~~\textrm{and, in particular,}~~~ f(S_{k+1}) = \sum_{t=1}^k \alpha^{(t)}
	\]
	and thus,
	\[
	\dual^{(t)} = k\alpha^{(t)} + \sum_{s=1}^{t-1} \alpha^{(s)}
	\]
	Now consider the fractions $\theta_1, \ldots, \theta_k$ defined as
	\[
	\theta_{t} = \left(1-\frac{1}{k}\right)\theta_{t+1}, ~~\textrm{and}~~\sum_{t=1}^k \theta_t = 1
	\]
	Solving for this implies $\theta_k = \frac{1}{k\eta_k}$ where $\eta_k = 1 - (1-1/k)^k$. \smallskip
	
	\noindent
	The geometric decay among the $\theta_t$'s implies that for any $1\leq t\leq k$,
	\begin{equation}\label{eq:boo}
		k\theta_t + \sum_{s > t} \theta_s = k\theta_{t+1} + \sum_{s > t+1} \theta_s = k\theta_k = \frac{1}{\eta_k}
	\end{equation}
	Next observe
        \ipco{\small}
	\[
	\sum_{t=1}^k \theta_t\dual^{(t)} = 		\sum_{t=1}^k \theta_t \left(k\alpha^{(t)} + \sum_{s=1}^{t-1} \alpha^{(s)}\right) = \sum_{t=1}^k \alpha^{(t)}\cdot\left(	k\theta_t + \sum_{s > t} \theta_s \right) \underbrace{=}_{\eqref{eq:boo}} \frac{1}{\eta_k} \cdot \sum_{t=1}^k \alpha^{(t)}
	\]
        \ipco{\normalsize}
	Noting that $f(S_{k+1}) = \sum_{t=1}^k \alpha^{(t)}$, we get that $f(S_{k+1}) = \eta_k \cdot \sum_{t=1}^k \theta_t \dual^{(t)}$, proving the theorem.
\end{proof}
We end by noting that one could also define the distribution $\tau_{\mathsf{Greed}}$ on sets which gave mass $\theta_i$ on the set $S_i$ for $1\leq i\leq k$.
By Jensen's inequality, we get that $\dual(\tau_{\mathsf{Greed}}) \leq \sum_{t=1}^k \theta_t\dual^{(t)}$, and therefore we would get $f(S_{k+1}) \geq \eta_k\cdot \dual(\tau_{\mathsf{Greed}})$
as well. In our experiments described in~\Cref{sec:experiment}, we use $\greedydual_1$ to denote the smaller of these two values. Furthermore, by considering only nonzero $\tau$ values on the sets $S_1,\cdots,S_{k+1}$, we can form a linear program using the dual as the sum described in $\eqref{eq:d1}$ which is no longer exponential in size. Note that the dual bound found from solving this minimization problem will be guaranteed to be at least as good as that of $\greedydual_1$, and in particular, will be within $(1-1/e)$. 
We denote this linear programming dual solution as $\greedydual_2$ in \Cref{sec:experiment}.
\section{Experiments} \label{sec:experiment}
We perform an empirical evaluation of our primal-dual algorithm. We run on several \msm~instances and compare (i) how our dual upper bound compares with the current state-of-the-art, namely the Balkanski-Qian-Singer (BQS) upper bound~\cite{BalkQS21}, (ii) how our primal solution compares with the greedy solution, and (iii) how our running times compare with the BQS upper bound algorithm. 

\paragraph{Submodular Functions.} We run on two types of submodular functions. One class is the example of coverage functions, where the set-system (bipartite graph) is obtained via random graphs created using four common graph generation models: Stochastic Block (SB) Model, Erdos-Reyni (ER), Watts-Strogatz (WS), and Barbasi-Albert (BA) with n$\approx$500 and k values from 10 to 40. The second class arises
from real-world optimization problems that are representative of different types of submodular maximization applications: optimizing sensor placement on California roadways \cite{CARoads}, revenue maximization for YouTube network data \cite{MirzBK}, influence maximization on Facebook network data \cite{TrauMP12}, influence maximization in citation networks \cite{Lesk07}, and optimizing recommendations on MovieLens 1M data \cite{Harper15}. Most of these data sets were obtained from the public website~\cite{Breuer} which was created along with the paper by Breuer, Balkanski and Singer~\cite{BreuBS20}. 

%

\paragraph{Upper Bound Benchmarks.} We compare the dual upper bound that our primal-dual algorithm returns (we call this $\primdual$) with two types of upper bound benchmarks. 
The first, $\bqs$, is the upper bound described by Balkanski, Qian, and Singer~\cite{BalkQS21}. In~\Cref{sec:bqs-dualfit} we give a short but self-contained description of their upper bound.
The main empirical finding of~\cite{BalkQS21} is that the $\bqs$ upper bound significantly out-performs other upper bound benchmarks, such as the {\em top-$k$} which is the sum of the $k$ largest singletons or the upper bound described in~\cite{SakaI18}. As we report below, $\primdual$ is quite comparable to $\bqs$ and out-performs these upper-bounds; for simplicity of presentation we do not
display these comparisons.

The second type of upper bound benchmark we compare with is the dual fitting certificates that we describe in~\Cref{sec:dual-fitting}, $\greedydual_1$ and $\greedydual_2$. Unlike $\bqs$, there is a worst-case guarantee on both $\greedydual$ certificates; by construction, we have that the greedy solution is at least $(1-1/e)\greedydual_1\geq (1-1/e)\greedydual_2$. We were interested in what this ratio is empirically. Note that the only guarantee~\cite{BalkQS21} 
could show on $\bqs$ is that the greedy solution is at least $\bqs/2$. 

\paragraph{Findings.} We report some representative trials and the interested reader may obtain the full results and implementation if each experiment from~\cite{CoteGithub}.
In the figure below, we show how the three different upper bounds compare on three different submodular functions. 
The $x$-axis is the various values of $k$ and the $y$-axis is the numeric value of the upper bound.
This picture shows how $\primdual$, $\bqs$, and $\greedydual_2$ are comparable while being significantly better than $\greedydual_1$ (recall, lower is better). 

	\includegraphics[width=0.9\textwidth]{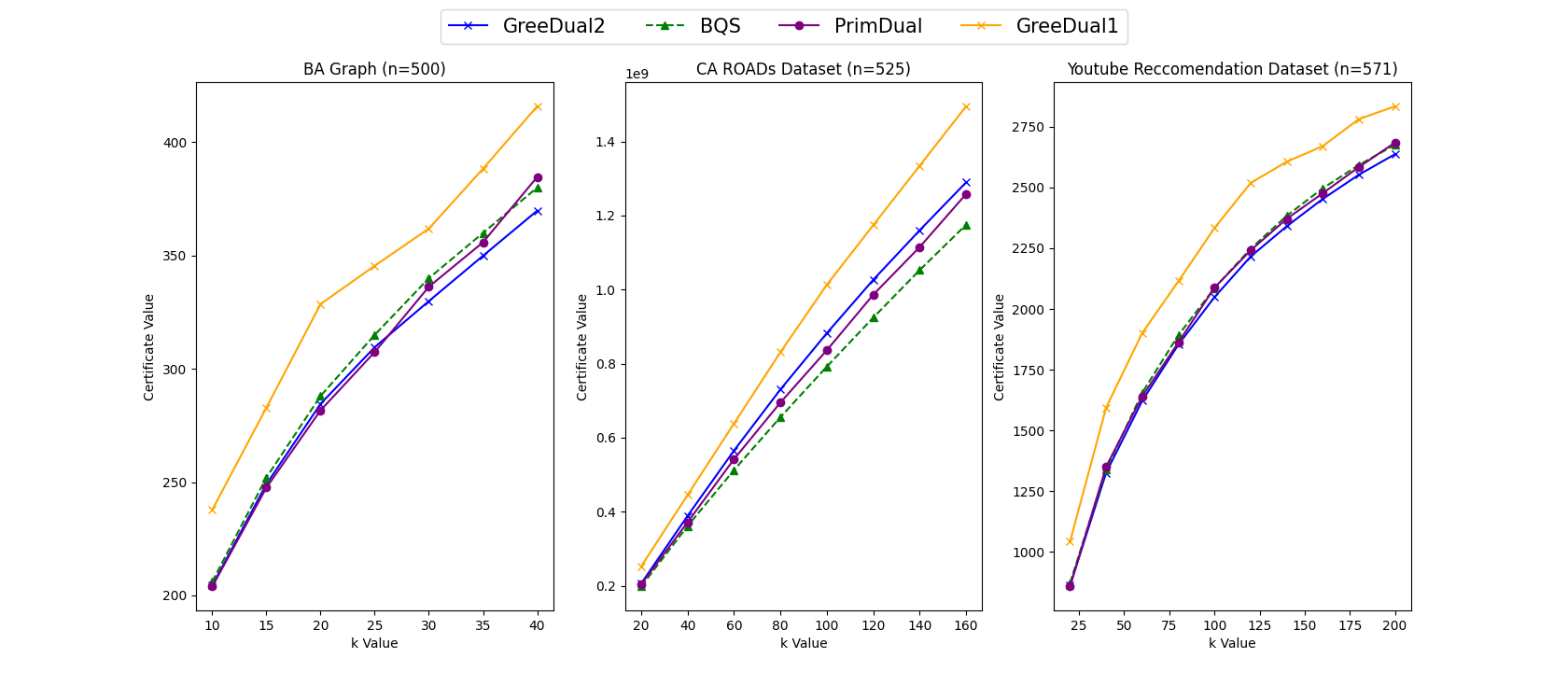}
\ipco{Due to space restrictions, we move the remaining empirical findings to~\Cref{sec:expt-ipco}}

\arxiv{\noindent
In the following table, we also give numerical values of the ratio of the greedy solution and the dual solution; closer to $1$ is better. The various columns 
are our submodular functions and the number is the average taken over all $k$'s. As one can see $\primdual$, $\bqs$, and $\greedydual_2$ often provide an approximation guarantee  well over 85\%.
\begin{table}[H]
    \small
	\centering
	\scalebox{0.9}{
		\begin{tabular}{|c c c c c c c c c c|} 
			\hline
			& BA Graph & ER Graph & SBM Graph & WS Graph &  CA Roads & Facebook & Youtube & Citations & MovieLens\\ [0.5ex] 
			\hline\hline
			$\primdual$ & 0.9589 & 0.8839 & 0.9530 & 0.9919 & 0.9167 & 0.8212 & 0.9381 & 0.7952& 0.9369\\ 
			\hline
			$\greedydual_1$ & 0.7530 & 0.7341 & 0.7709 & 0.8831 & 0.7645 & 0.7071 & 0.7181 & 0.6981 & 0.7852\\ 
			\hline
               $\greedydual_2$ & 0.9663 & 0.8272 & 0.9310 & 0.9919 & 0.8829 & 0.7935 & 0.9494 & 0.7685 & 0.9544\\ 
                \hline
			$\bqs$ & 0.9479 & 0.8792 & 0.9782 & 0.9944 & 0.9660 & 0.8309 & 0.9342 & 0.8276 & 0.9615\\
			\hline
		\end{tabular}
	}
	\end{table}

The following figure shows the comparison of the greedy solution with the primal solution returned by our primal-dual algorithm.
As can be see, the performance of the primal algorithm almost matches that of the greedy algorithm, although the latter
often \emph{slightly} outperforms the former with a maximum deviation of 8.1\% across all trials.
	
	\includegraphics[width=0.9\textwidth]{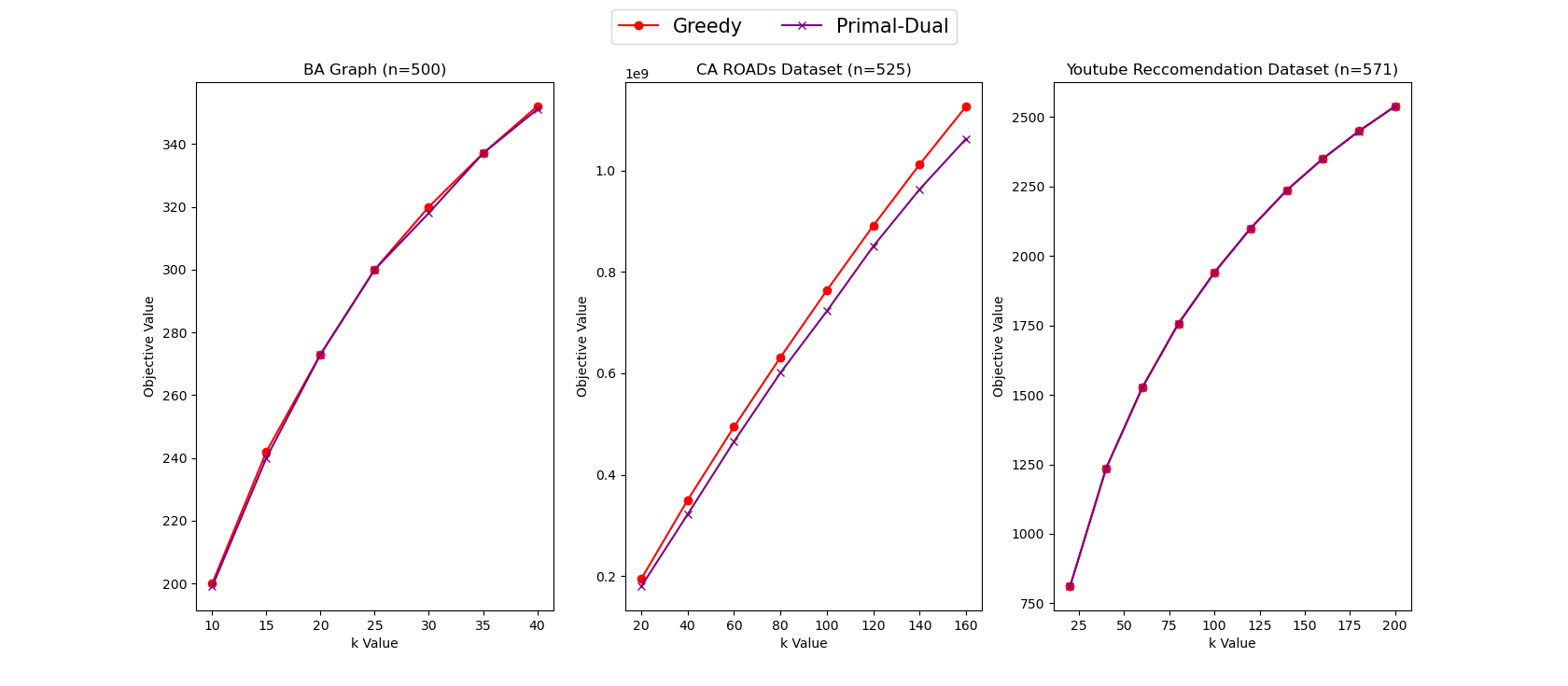}

The following figure shows the runtimes of our primal-dual algorithm, the (naive) implementation of the greedy algorithm, the ``Method 4'' of~\cite{BalkQS21} which returns $\bqs$, and a LP solution for $\greedydual_2$ which uses the SciPy interior-point implementation \cite{2020SciPy-NMeth}.
The primal-dual algorithm often runs slightly slower than the greedy algorithm, but often beats the BQS upper bound calculation algorithm. Note that this method needs to run the greedy algorithm
and then run the ``Method 3'' of~\cite{BalkQS21} (see~\Cref{sec:bqs-dualfit} for more on this algorithm). Note that the $\greedydual_2$ upper bound runs significantly slower than both the primal-dual algorithm and the BQS upper bound due to the computationally intensive nature of solving an LP.

	\includegraphics[width=0.9\textwidth]{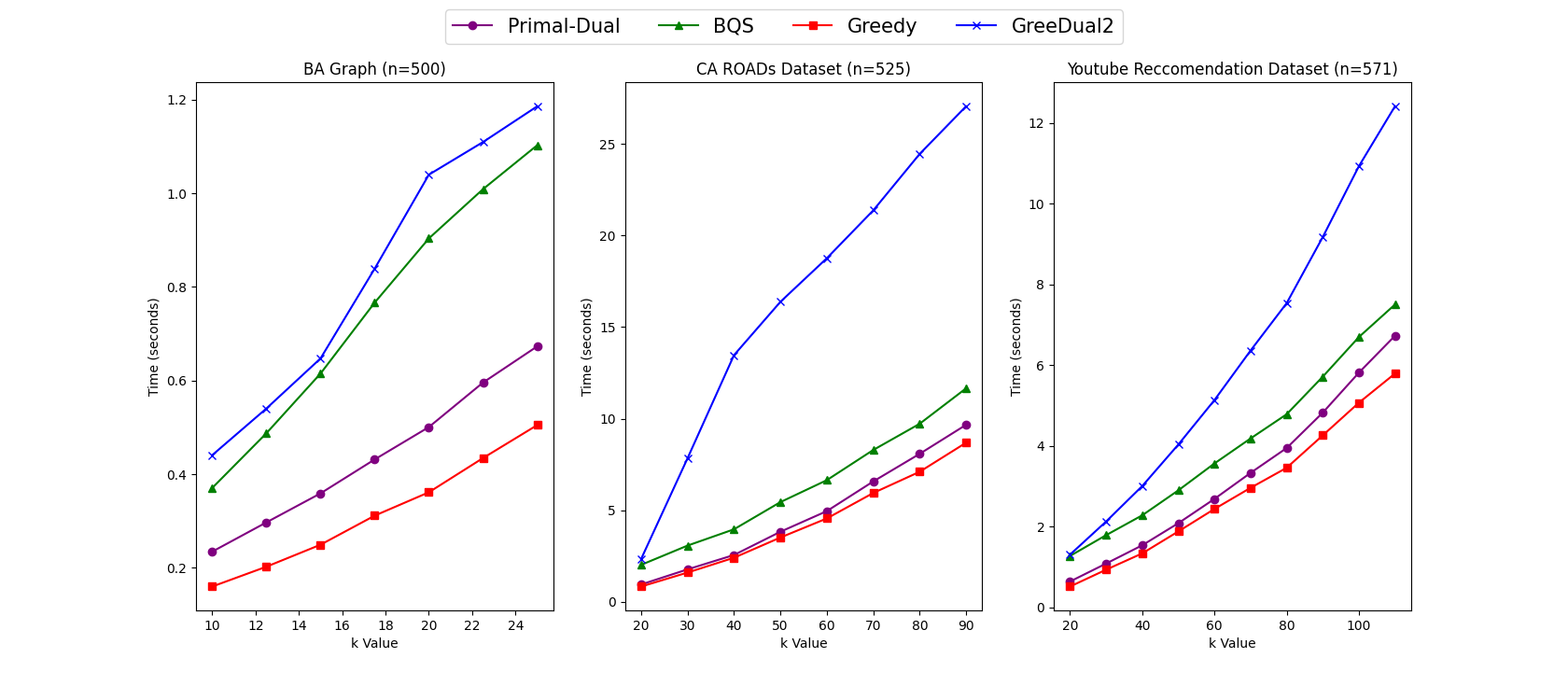}
}
\arxiv{\section{Dual Analyses for \mmsm.}\label{sec:matroid-dualfit}
In this section we expand our primal-dual exploration to the of \msm~where instead of a cardinality constraint on the set $S$ being returned, the constraint is that the set $S$
needs to be an independent set of a certain matroid $M(V, \mathcal{I})$. In short, we call this the \mmsm~problem.
This is a richer problem capturing many other fundamental problems such as the generalized assignment problem (GAP)~\cite{ShmoY93,ChekK05,FleiGMS11} from operations research and the maximum submodular welfare maximization problem~\cite{LehmLN01,DobzS06,FeigV06} from algorithmic economics. The \mmsm~problem was first studied by Fisher, Nemhauser, and Wolsey~\cite{FishNW78} soon after the cardinality case, and they proved that the natural greedy algorithm which proceeds in rounds picking an {\em eligible} element (which can be added without violating the independence) which leads to the maximum function increase, is a $1/2$-approximation. In a seminal result, Calinescu~\etal~\cite{CaliCPV11} described a $(1-1/e)$-approximation via the {\em continuous greedy} algorithm. Unlike the greedy algorithm, the continuous greedy algorithm is randomized, and it is an open question to design a deterministic algorithm which achieves the optimal 
$(1-1/e)$-factor. While we do not generate such an algorithm in this section, we hope that the ideas and approaches explored here will aid in future research surrounding this question. Indeed, the work in this section is purely expositional and provides little novel results to the reader - rather this section should be taken as a collection of insights for those interested in further exploration of the \mmsm~problem.


We begin with the obvious generalization of the Nemhauser-Wolsey~\cite{NemhW81} linear programming relaxation which, as mentioned before, was also studied by Calinescu~\etal~\cite{CaliCPV07} in their IPCO version. Belore, $r_M(T)$ is the {\em rank} function which returns the size of the largest independent subset of $T$. It is well known that \eqref{eq:p3} captures
the integer hull of the characteristic vector of all independent sets. We also write the dual.
\arxiv{
\begin{minipage}{0.44\textwidth}
	\fontsize{10pt}{12pt}
	\begin{alignat}{6}
		\lp:=&&\max~~~~\lambda~&&~&\tag{\sf{primal-matroid}} \label{eq:primal-matroid} \\
		&&\sum_{j \in T} x_j &\leq && r_M(T), && \forall T \subseteq V~&\tag{P3}\label{eq:p3}\\
		&&\lambda - \sum_{j \in V} f_{S}(j)x_j &~\leq~ && f(S), && ~\forall S \subseteq V \tag{P4}\label{eq:p4}\\
		&&x_j &~\geq~&& 0, && ~\forall j\in V \notag
	\end{alignat}
\end{minipage} ~~~\vline
\begin{minipage}{0.52\textwidth}
	\fontsize{10pt}{12pt}
	\begin{alignat}{6}
		\min~   &&~\sum_{T \subseteq V}r_M(T) \alpha_T + \sum_{S\subseteq V}\tau_S f(S)  &=:&& \dual \notag\\
		&&\sum_{T \subseteq V :	 j\in T}\alpha_T - \sum_{S\subseteq V}\tau_S f_{S}(j) &~\geq~ && 0,~ \forall j \in V \tag{D4} \label{eq:d4}\\
		&& \sum_{S\subseteq V} \tau_S &~=~&& 1 \tag{D5} \label{eq:d5}\\
		&& \tau_S &~\geq~&& 0, ~\forall S\subseteq V \tag{D6} \label{eq:d6}
	\end{alignat}
\end{minipage}
}
\ipco{
	\begin{alignat}{6}
	\lp:=&&\max~~~~\lambda~&&~&\tag{\sf{primal-matroid}} \label{eq:primal-matroid} \\
	&&\sum_{j \in T} x_j &\leq && r_M(T), && \forall T \subseteq V~&\tag{P3}\label{eq:p3}\\
	&&\lambda - \sum_{j \in V} f_{S}(j)x_j &~\leq~ && f(S), && ~\forall S \subseteq V \tag{P4}\label{eq:p4}\\
	&&x_j &~\geq~&& 0, && ~\forall j\in V \notag
\end{alignat}
	\begin{alignat}{6}
	\min~   &&~\sum_{T \subseteq V}r_M(T) \alpha_T + \sum_{S\subseteq V}\tau_S f(S)  &=:&& \dual \notag\\
	&&\sum_{T \subseteq V :	 j\in T}\alpha_T - \sum_{S\subseteq V}\tau_S f_{S}(j) &~\geq~ && 0,~ \forall j \in V \tag{D4} \label{eq:d4}\\
	&& \sum_{S\subseteq V} \tau_S &~=~&& 1 \tag{D5} \label{eq:d5}\\
	&& \tau_S &~\geq~&& 0, ~\forall S\subseteq V \tag{D6} \label{eq:d6}
\end{alignat}
}

\noindent
As in the case of \msm, the dual contains a {\em probability} distribution $\tau$ over subsets of $V$. The $\alpha_T$'s are the dual variables on subsets, and these correspond to 
the rank-constraints, \eqref{eq:p3}, of the primal. These are harder to give intuition about, but are precisely the dual-variables that arise, for instance, in 
Edmond's (see, for instance,~\cite{Edmonds03}) proof that the ``greedy'' algorithm finds the maximum weight basis in a matroid. 
The objective function is the expected function value with respect to $\tau$ plus the rank-weighted $\alpha$-mass over the subsets.
The constraint on $\alpha_T$'s are that the total $\alpha$-mass facing an element $j$ must be at least the expected marginal of $j$ on a set drawn from the distribution $\tau$.	

\subsection{A Dual Fitting Analysis of the Greedy Algorithm for \mmsm}
The commonly used greedy algorithm for \msm~can be generalized under a matroid constraint by repeatedly choosing the element with the maximum marginal utility whose addition to the current solution set lies within the matroid polytope. The result from running such a greedy algorithm on an instance of \mmsm~can be used to generate the following dual certificate.

We begin with some useful definitions, we define $k:=r_M(V)$ to be the rank of the full matroid, $S_i \subseteq V$ be the greedy solution after $i$ choices, $p_i \in V$ be the $i$th element chosen by the greedy solution, and $T_i := \spann(S_i)$, that is, the set of elements $j$ such that $S_i \cup j$ is dependent (that is, not in $\calI$). Note that $T_a \subseteq T_b$ for $1\leq a < b \leq k$
and $r_M(T_i) = i$ for all $1\leq i\leq k$.

Consider the dual solution, $\dual_G$, where $\tau_{S_k} = 1$, that is, we put all the mass on the greedy algorithm's output. Now, define the $\alpha$ values as follows: 
\[\alpha_{T_k} = \max_{j \in T_k}f_{S_k}(j)~~~~\textrm{and}~~~~
\alpha_{T_i} = \left(\max_{j \in T_i \setminus T_{i-1}}f_{S_k}(j)\right)-\sum_{\ell = i+1}^k\alpha_{T_\ell},~~\forall k > i \geq 1\]
\noindent
Now observe a couple of things
\begin{itemize}
	\item For every $e\in V$, there is an $1\leq i\leq k$ such that $e \in T_i, T_{i+1}, \ldots, T_k$ but $e\notin T_{i-1}$. And so,
	\begin{equation}\label{eq:goopy}
		\sum_{T \subseteq V : e\in T} \alpha_{T} = \sum_{\ell=i}^k \alpha_{T_\ell} \underbrace{=}_{\text{definition}} \max_{j\in T_i\setminus T_{i-1}} f_{S_k}(j) \geq f_{S_k}(e)
	\end{equation}
	where the inequality follows since $e\in T_i \setminus T_{i-1}$. Therefore, the $\alpha$'s and $\tau$'s satisfy \eqref{eq:d4}.
	\item Next, note by the definition of the $\alpha$'s, that 
	\begin{equation}\label{eq:bagha} 
		\sum_{i=1}^k i\cdot \alpha_{T_i} = \sum_{i=1}^k \left(\max_{j \in T_i \setminus T_{i-1}}f_{S_k}(j)\right) \underbrace{\leq}_{\text{submodularity}} \sum_{i=1}^k \left(\max_{j \in T_i \setminus T_{i-1}}f_{S_{i-1}}(j)\right)
	\end{equation}
	The LHS is simply $\sum_{T \subseteq V} r_M(T)\alpha_T$ and the RHS, by the greedy property of the algorithm, is precisely $f(S_k)$. This is because $\max_{j\in T_i\setminus T_{i-1}} f_{S_{i-1}}(j) = f(S_{i}) - f(S_{i-1})$, and the RHS summation telescopes to $f(S_k)$ (using $f(\emptyset) = 0$).
\end{itemize}
Together, we get that $\sum_{T \subseteq V} r_M(T)\alpha_T \leq f(S_k)$. By design, $\sum_{S \subseteq V} \tau_S f(S) = f(S_k)$. Therefore, the dual objective value of $(\alpha, \tau)$ is 
$\leq 2f(S_k)$, giving a dual-fitting proof of the fact that the greedy algorithm is a $2$-factor approximation. Note that if the inequalities in \eqref{eq:bagha} for a given is slack, we get a certifiably better than $2$-approximation.
%
%
%
It should also be noted that there exist instances which show the above analysis is tight.
\subsection{A Dual Fitting Analysis of the Continuous Greedy Algorithm}

The ``continuous greedy'' algorithm of  Calinescu \etal~in \cite{CaliCPV11} achieves an optimal  $(1-\frac{1}{e})$ approximation factor in expectation. We now show 
how one can also give a dual-fitting proof; more precisely, we describe a randomized algorithm which also returns a dual solution $(\alpha, \tau)$ such that 
$\Exp[f(R)] \geq (1-1/e)\dual(\alpha,\tau)$ where $R$ is the random output of continuous greedy. We remark that this is not a new result as such a result is implicit
in the ``certificate discussion'' present in Lemma 7.3 of \cite{ChekVZ18}. However, we believe that making this connection explicit may give some insights into possibly derandomizing
and potentially obtaining a {\em deterministic} $(1-1/e)$-approximation algorithm for the \mmsm~problem.
%

Before we describe the dual, let us give a (very) quick description of the continuous greedy algorithm from~\cite{CaliCPV11}; indeed, we assume the reader is familiar with this algorithm.
So, recall the multilinear extension, $F(y):=\sum_{S \subseteq U}\Pr_y[S] f(S)$ where $y\in [0,1]^n$ is a fractional point, and $\Pr_y[S]$ is the probability of drawing the set $S$ where element $j$ is picked in set $S$ with probability $y_j$. So, $\Pr_y[S] = \prod_{j\in S} y_j \prod_{j\notin S} (1-y_j)$. Calinescu~\etal~\cite{CaliCPV07} showed how given an $y\in \calP_M$, the matroid independent set polytope, one can get a distribution $\calD$ on independent sets of the matroid such that $\Exp_\calD[f(R)] \geq F(y)$ where $R$ is the output of the algorithm. Thus, it suffices to design an algorithm which obtains a $y^*\in \calP_M$ with $F(y^*) \geq (1-1/e)\opt$. This is the continuous greedy algorithm, and we describe the ``continuous version'' of it, and show how to construct a dual solution $(\alpha,\tau)$ such that $F(y^*) \geq (1-1/e)\dual$.

The continuous greedy algorithm starts with $y = \vec 0$ and continuously moves $y(t)$ in the direction of greatest increase within the feasible region until time $t=1$; the final solution $y^* = y(1)$.
The rate of increase of the $y(t)$-vector is given by 
\begin{equation}\label{eq:cont-greed}
	\frac{dy(t)}{dt} = \argmax_{v \in \mathcal{P}_M} ~~~v\boldsymbol{\cdot}\grad F(y(t)) \tag{Continuous Greedy}
\end{equation}	
where the RHS can be obtained by Edmond's greedy algorithm. \medskip

We now describe the dual solution $(\alpha(t),\tau(t))$ also on a continuous way. Indeed, the distribution $\tau(t)$ is precisely the product distribution induced by $y$.
Namely,
\[
\tau(t)_S := \prod_{j\in S} y(t)_j \cdot \prod_{j\notin S} (1 - y(t)_j)
\]
We hide the dependence on $t$ for brevity's sake. So note that $\Exp_{S\sim \tau} [f(S)]$ is {\em precisely} $F(y)$.

Next, for each element $j\in V$, define $\beta_j := \Exp_{S \sim \tau}f_S(j)$. Further, define 
$P(y) := \{p_1,p_2,\dots,p_k\}$ to be the ordered maximum weight independent set in $\calM$ where the weight of every element $j$ is $\beta_j$ picked by Edmond's greedy algorithm.
As in the previous section, define $T_i := \spann(\{p_1, \ldots, p_i\})$, and consider the $\alpha = \alpha(t)$ values
\[\alpha_{T_k} = \max_{j \in T_k}\beta_j~~~~\textrm{and}~~~~
\alpha_{T_i} = \left(\max_{j \in T_i \setminus T_{i-1}}\beta_j\right)-\sum_{\ell = i+1}^k\alpha_{T_\ell},~~\forall k > i \geq 1\]
\noindent
As in \eqref{eq:goopy}, the above $(\alpha, \tau)$ forms a feasible dual solution. As in \eqref{eq:bagha}, we have $\sum_{T\subseteq V} r_M(T)\alpha_T = \sum_{i=1}^k \left(\max_{j \in T_i \setminus T_{i-1}}\beta_j\right)$.
%
\begin{lemma}\label{lem:grad-alpha}
	$\sum_{T \subseteq V} r_M(T)\alpha_{T} \leq v^* \boldsymbol{\cdot} \grad F(y)$
\end{lemma}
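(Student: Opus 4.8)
The plan is to bound the greedy ``load'' $\sum_{T\subseteq V} r_M(T)\alpha_T$ first by the coordinates of $\grad F(y)$ along $P(y)$, and then to observe that $P(y)$ is a feasible direction for \eqref{eq:cont-greed}. The one analytic ingredient needed is the relation between $\beta_j$ and the gradient. Writing $\partial_j F(y):=\frac{\partial F}{\partial y_j}(y)$ and conditioning the random set $S\sim\tau$ (the product distribution with marginals $y$) on whether $j\in S$ --- an event on which $f_S(j)=0$ --- one obtains
\[
\beta_j \;=\; \Exp_{S\sim\tau}[f_S(j)] \;=\; (1-y_j)\cdot\partial_j F(y) \;\le\; \partial_j F(y),
\]
the inequality because $f$ is monotone (so $\partial_j F(y)\ge 0$) and $y_j\in[0,1]$. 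This is the only place the multilinear extension really enters.

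Next I would make $\sum_{T\subseteq V} r_M(T)\alpha_T$ explicit. As already noted just before the lemma, $\sum_{T\subseteq V} r_M(T)\alpha_T = \sum_{i=1}^k \max_{j\in T_i\setminus T_{i-1}}\beta_j$. I claim each inner maximum is attained at $p_i$, i.e. $\max_{j\in T_i\setminus T_{i-1}}\beta_j = \beta_{p_i}$: indeed $p_i\in T_i\setminus T_{i-1}$, and any $j\in T_i\setminus T_{i-1}$ fails to be spanned by $\{p_1,\dots,p_{i-1}\}$, hence was an eligible candidate at the step where Edmonds' greedy (run with weights $\beta$) picked $p_i$ as an eligible element of largest $\beta$-weight; therefore $\beta_j\le\beta_{p_i}$. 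Consequently $\sum_{T\subseteq V} r_M(T)\alpha_T = \sum_{i=1}^k \beta_{p_i}$.

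Finally I would invoke feasibility: $P(y)=\{p_1,\dots,p_k\}$ is an independent set of $\mathcal{M}$, so its characteristic vector $\mathbf{1}_{P(y)}$ lies in $\mathcal{P}_M$, and hence
\[
v^*\boldsymbol{\cdot}\grad F(y) \;=\; \max_{v\in\mathcal{P}_M} v\boldsymbol{\cdot}\grad F(y) \;\ge\; \mathbf{1}_{P(y)}\boldsymbol{\cdot}\grad F(y) \;=\; \sum_{i=1}^k \partial_{p_i}F(y) \;\ge\; \sum_{i=1}^k \beta_{p_i} \;=\; \sum_{T\subseteq V} r_M(T)\alpha_T,
\]
where the last inequality uses the per-coordinate bound from the first display. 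This is exactly the claimed inequality.

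I do not anticipate a genuine obstacle here: the argument mirrors the matroid-polytope / Edmonds-greedy bookkeeping already used for the discrete greedy dual-fitting analysis (equations \eqref{eq:goopy}--\eqref{eq:bagha}), and the only two points worth stating with care are the identity $\beta_j=(1-y_j)\,\partial_j F(y)$ (equivalently, the bound $\beta_j\le\partial_j F(y)$) and the fact that $P(y)$ --- though built from the $\beta$-weights rather than from the $\grad F(y)$-weights --- is nonetheless a legal search direction for continuous greedy precisely because it is independent in $\mathcal{M}$.
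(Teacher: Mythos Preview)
Your proof is correct and follows essentially the same route as the paper: both argue that $\mathbf{1}_{P(y)}\in\mathcal{P}_M$ gives $v^*\boldsymbol{\cdot}\nabla F(y)\ge \mathbf{1}_{P(y)}\boldsymbol{\cdot}\nabla F(y)$, and then identify the latter with $\sum_i\beta_{p_i}=\sum_T r_M(T)\alpha_T$ via Edmonds' greedy. If anything you are more careful than the paper on one point: the paper asserts $\partial_{p_i}F(y)=\beta_{p_i}$ as an equality ``by definition of $F$'', whereas you correctly observe $\beta_j=(1-y_j)\,\partial_j F(y)\le \partial_j F(y)$, which is all that is needed for the inequality in the lemma.
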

\begin{proof}
	Note that $P(y) \in \mathcal{I}$ and so $\mathbbm{1}_{P(y)} \in \mathcal{P}_M$ and thus $v^* \boldsymbol{\cdot} \grad F(y) \geq  \mathbbm{1}_{P(t)} \boldsymbol{\cdot} \grad F(y)$. Additionally, note that (by definition of $F$) that $\forall p_i \in P(y), \mathbbm{1}_{\{p_i\}} \boldsymbol{\cdot} \grad F(y) = \Exp_{S \sim \tau}f_S(p_i) = \beta_{p_i}$, yielding
\arxiv{
	\begin{alignat}{2}
		\mathbbm{1}_{P(t)} \boldsymbol{\cdot} \grad F(y) ~=~ \sum_{p_i \in P(y)}\beta_{p_i} ~\underbrace{=}_{\text{Edmond's greedy algorithm}}~ \sum_{i=1}^k\max_{j \in I_i \setminus I_{i-1}}\beta_j ~=~ \sum_{T \subseteq V} r_M(T)\alpha_{T}  \notag\arxiv{\qedhere}
	\end{alignat}
}
\ipco{
$\mathbbm{1}_{P(t)} \boldsymbol{\cdot} \grad F(y) =~ \sum_{p_i \in P(y)}\beta_{p_i}$ which equals $\sum_{i=1}^k\max_{j \in I_i \setminus I_{i-1}}\beta_j$ by Edmonds greedy algorithm.
As noted above, this sums to $\sum_{T \subseteq V} r_M(T)\alpha_{T}$
}
\end{proof}
A consequence of the above lemma is that the dual objective of $(\alpha(t),\tau(t))$ satisfies
\arxiv{
\[
\dual(t) = \sum_{T \subseteq V} r_M(T)\alpha(t)_{T} + \Exp_{S \sim \tau(t)} [f(S)] \leq F(y(t)) + v^* \boldsymbol{\cdot} \grad F(y(t))~~\underbrace{=}_{\text{continuous greedy}} F(y(t)) + \frac{dF(y(t))}{dt}
\]
}
\ipco{
$\dual(t) = \sum_{T \subseteq V} r_M(T)\alpha(t)_{T} + \Exp_{S \sim \tau(t)} [f(S)] \leq F(y(t)) + v^* \boldsymbol{\cdot} \grad F(y(t)) = F(y(t)) + \frac{dF(y(t))}{dt}$
where the last equality follows by the continuous greedy algorithm's description.
}
Readers familiar with the continuous greedy analysis would now notice the familiar differential equation whose solution leads to the following lemma.
\begin{lemma} \label{clm:minfit}
	$\min_{t \in [0,1]}\dual_{y(t)} \leq (1-\frac{1}{e})F(y(1))$
\end{lemma}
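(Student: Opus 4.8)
The plan is to exploit the differential inequality
\[
F(y(t)) + \frac{dF(y(t))}{dt} \;\geq\; \dual(t) \;\geq\; \min_{s\in[0,1]} \dual(s) \;=:\; D^*,
\]
which was already established just before the lemma statement. Write $g(t) := F(y(t))$, so $g(0) = F(\vec 0) = 0$ (since $f(\emptyset)=0$) and the inequality reads $g'(t) \geq D^* - g(t)$. This is exactly the standard ODE that appears in every continuous-greedy analysis, and I would solve it by the integrating-factor trick: multiply through by $e^t$ to get $\frac{d}{dt}\bigl(e^t g(t)\bigr) = e^t\bigl(g'(t)+g(t)\bigr) \geq e^t D^*$.

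Next I would integrate this from $0$ to $1$. The left side telescopes to $e^1 g(1) - e^0 g(0) = e\, F(y(1))$, using $g(0)=0$. The right side integrates to $D^* \int_0^1 e^t\, dt = D^*(e-1)$. Hence $e\, F(y(1)) \geq (e-1) D^*$, i.e. $D^* \leq \frac{e-1}{e} F(y(1)) = (1-\tfrac{1}{e}) F(y(1))$. Since $D^* = \min_{t\in[0,1]} \dual(t) = \min_{t\in[0,1]} \dual_{y(t)}$, this is precisely the claimed inequality.

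Two small points warrant care. First, one should note $D^*$ is well-defined: $\dual(t)$ depends continuously on $y(t)$ (through $F$, the product distribution $\tau(t)$, and the piecewise-linear-in-$\beta$ rank-weighted $\alpha$-term), and $[0,1]$ is compact, so the minimum is attained; alternatively one can just work with the infimum throughout and the argument is unchanged. Second, the inequality $g'(t) \geq D^* - g(t)$ holds for every $t$, so multiplying by the positive quantity $e^t$ preserves it pointwise and the integration step is legitimate; no monotonicity of $g$ or differentiability beyond what the continuous-greedy setup already grants is needed.

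There is really no serious obstacle here — the only ``work'' is recognizing that the bound $\dual(t) \leq F(y(t)) + \frac{dF(y(t))}{dt}$ from the preceding display, combined with replacing $\dual(t)$ on the left by its minimum over $[0,1]$, yields a clean first-order linear differential inequality whose integration gives the $(1-1/e)$ factor. The one thing to double-check is the direction of the inequality after multiplying by $e^t$ and integrating, and the boundary term $g(0)=0$; both are routine. So the proof is essentially: reduce to the ODE, apply the integrating factor $e^t$, integrate over $[0,1]$, and rearrange.
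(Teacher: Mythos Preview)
Your approach is exactly the one the paper intends: it explicitly says ``readers familiar with the continuous greedy analysis would now notice the familiar differential equation whose solution leads to the following lemma,'' and you carry out precisely that standard integrating-factor argument on $g(t)=F(y(t))$.

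There is, however, an arithmetic slip in your last step. From $e\,F(y(1)) \geq (e-1)\,D^*$ you divide incorrectly: dividing both sides by $(e-1)$ gives
\[
D^* \;\leq\; \frac{e}{e-1}\,F(y(1)) \;=\; \frac{1}{1-\tfrac{1}{e}}\,F(y(1)),
\]
not $D^* \leq \tfrac{e-1}{e}\,F(y(1))$ as you wrote. Equivalently, the correct conclusion is $F(y(1)) \geq (1-\tfrac{1}{e})\,D^*$. This is in fact the inequality the surrounding discussion uses (the ``best'' dual certifies a $(1-1/e)$-approximation, i.e.\ $F(y(1)) \geq (1-1/e)\cdot\min_t \dual_{y(t)}$); the lemma as printed has the factor on the wrong side --- note that $\min_t \dual_{y(t)} \geq \opt \geq F(y(1))$, so the printed inequality cannot hold for nontrivial instances. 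Your derivation up to $e\,F(y(1)) \geq (e-1)\,D^*$ is clean; just fix the final rearrangement and state the bound in the intended direction.
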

Note that in the LHS we {\em do not} have $\dual(1)$; the reason is that the dual objective value is not necessarily monotonically decreasing. Nevertheless, the ``best'' dual seen so far provides
the dual certificate for the $(1-1/e)$-approximation (and could potentially be better in certain instances). 

We make a final remark to end the section. The dual $\tau(t)$ is described {\em implicitly} using the distribution $y(t)$ thus we {\em do not} bypass the randomization required to compute the value of the dual: the only way we know is to sample and estimate using the techniques described in \cite{CaliCPV11}. We leave the question of designing an optimal primal-dual algorithm for  \mmsm~as an open question.

%
%
\section{Conclusion}
In this paper we consider the classic and important problem of maximizing a monotone submodular function subject to a cardinality constraint from the point of view of achieving 
a principled and efficient approach to obtain {\em instance-wise upper bounds} which have provable optimal worst-case guarantees and also have good empirical performance. 
We do so via a primal-dual approach on an exponentially sized linear program of Nemhauser and Wolsey~\cite{NemhW81}. In particular, we devised a novel algorithm which returns a set $S$ and a $\dual$~upper bound with an optimal $\eta_k$ approximation guarantee, and also comparable empirical performance with respect to state-of-the-art. We also consider the generalization of
the LP to matroid constraints and give dual-fitting analyses for two common \mmsm~algorithms -- traditional greedy and continuous greedy -- with both dual-fitting upper bounds achieving their respective algorithm's optimal approximation guarantee. While this paper does not provide a novel algorithm for the \mmsm~problem, the duals do provide instance-wise upper bounds which could be potentially better, and more importantly, can provide new insight for future investigations towards an optimal deterministic algorithm for \mmsm.
From a technical standpoint, we give a primal-dual approximation algorithm for a {\em maximization} problem, adding to the small list of such examples. We believe that more research is warranted on the primal-dual schema for maximization problems and believe our work may spur interest. In particular, it is interesting to approach \mmsm, or even special cases of it like the generalized assignment problem, with this lens.



\arxiv{\subsection*{Acknowledgements}

The authors thank Ankita Sarkar who spurred this work by asking whether a primal-dual algorithm exists for the maximum coverage problem. We also thank Jan Vondr\'ak for enlightening email exchanges,
pointing us to relevant literature and suggesting that an explicit dual-fitting for continuous greedy may be possible. We thank anonymous reviewers for multiple suggestions and pointers to references which have improved the presentation of this material. During this work, DC was partially supported by NSF grant CCF-2041920 and LC was partially supported by an award from the Lovelace Research Program at Dartmouth College.
}

\bibliographystyle{abbrv}
\bibliography{refs}
\appendix

\section{Notes on the Primal LP}\label{sec:notes-on-primal}

\paragraph{Randomized Rounding.}
Given a feasible solution $(x,\lambda)$ to \eqref{eq:p1}, we show a simple randomized algorithm which returns a subset $R$ with $\Exp[f(R)] \geq (1-\frac{1}{e})\cdot \lp$.
The algorithm begins with $R = \emptyset$, and proceeds in $k$ independent rounds where in each round a single element $j\in V$ is chosen to be in $R$ with probability $\frac{x_j}{k}$.
Note, the same element could be picked multiple times into $R$ (which is a waste).
Let $R_t$ denote the set $R$ at the beginning of round $t$; so $R_1 = \emptyset$ and $R = R_{k+1}$.
Note that
$\Exp[f(R_{t+1}) - f(R_t)~|~ R_t] = \sum_{j\in V} \Pr[j~\textrm{picked at round $t$}]\cdot \left(f(R_t \cup j) - f(R_t)\right)$ and this equals $\frac{1}{k}\sum_{j\in V} f_{R_t}(j)x_j \geq \frac{\lambda - f(R_t)}{k}$.
Taking expectations again and rearranging
\[
\Exp[f(R_{t+1})] \geq \left(1 -\frac{1}{k}\right)\Exp[f(R_t)] + \frac{\lambda}{k}
\]
\Cref{clm:ana} now proves
$\Exp[f(R)] = \Exp[f(R_{k+1})] \geq \eta_k \cdot \lambda$.

\paragraph{Infeasibility of Solving \eqref{eq:p1}.}
The above LP has exponentially many constraints. Can it be solved? A sufficient condition is separability: given $(x_j, j\in V; \lambda)$, we want to know if 
\begin{equation}\label{eq:sep}
	\min_S \left(f(S)  + \sum_{j\in V} x_j f_S(j) \right) \geq \lambda
\end{equation}
or not. The right hand side is precisely the function $f^*(x)$ defined by Calinescu~\etal~\cite{CaliCPV07}.
We now show that evaluating $f^*(x)$ can be require exponentially many queries to $f$; this complements the NP-hardness result mentioned in~\cite{CaliCPV07}.
Therefore, if we can solve the minimization problem in the LHS, we can solve the above LP. 

In particular, we will show the hardness for $x$-vector with $x_1 = x_2 = 1$ and all the rest are $0$.
That is, we cannot solve the following optimization problem
\[
\min_{S\subseteq V}~~ \Big(f(S\cup 1) + f(S\cup 2) - f(S) \Big)
\]
with polynomially many calls to an oracle of a monotone submodular function. In the remainder of the section, we establish this fact. \medskip

\noindent
We think of $V = \{0,1\}^{n+2}$ where the first two elements are $1$ and $2$. We use bit-vectors and subset notation interchangably.
We first define a monotone submodular function on $\{0,1\}^n$ as follows. First, fix a subset $R$ of size $k$, where $k$ is any parameter and for the lower bound, $k=n/2$ works best.

\begin{equation}\label{eq:pav-m}
	r_(S) = \begin{cases}
		|S| & \textrm{if $|S| \leq k-1$} \\
		k  & \textrm{if $|S| = k$ and $S\neq R$} \\
		k-1 & \textrm{if $S = R$} \\
		k & \textrm{if $|S| > k$}
	\end{cases}
\end{equation}

\begin{claim}\label{clm:pavms}
	$r$ is a monotone submodular function.
\end{claim}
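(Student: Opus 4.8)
The plan is to first rewrite $r$ in a transparent form and then verify monotonicity and submodularity separately. Observe that $r(S) = \min(|S|,k)$ for every $S \ne R$, while $r(R) = k-1$; equivalently $r = g - \mathbbm{1}_R$, where $g(S) := \min(|S|,k)$ is the rank function of the uniform matroid $U_{k,n}$ (in particular $g$ is monotone and submodular with $g(\emptyset)=0$) and $\mathbbm{1}_R(S) := \mathbbm{1}[S=R]$. Monotonicity of $r$ is then immediate: if $A \subsetneq B$, then $r$ and $g$ can disagree on $\{A,B\}$ only when $A = R$ (so $|B| > k$ and $r(B) = k > k-1 = r(A)$) or when $B = R$ (so $|A| \le k-1$ and $r(A) = |A| \le k-1 = r(B)$); otherwise monotonicity of $g$ suffices. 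Also $r(\emptyset) = 0$ since $k \ge 1$.

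For submodularity I would use the inequality form $r(A)+r(B) \ge r(A\cup B)+r(A\cap B)$. Substituting $r = g - \mathbbm{1}_R$,
\[
r(A)+r(B)-r(A\cup B)-r(A\cap B) \;=\; \underbrace{\big(g(A)+g(B)-g(A\cup B)-g(A\cap B)\big)}_{\ge\,0\text{ by submodularity of }g}\;-\;D,
\]
where $D := \mathbbm{1}_R(A)+\mathbbm{1}_R(B)-\mathbbm{1}_R(A\cup B)-\mathbbm{1}_R(A\cap B)$. A quick check shows $D \le 1$ always, with equality forcing exactly one of $A,B$ to equal $R$ and both $A\cup B \ne R$ and $A\cap B \ne R$. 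When $D \le 0$ the right-hand side is already $\ge 0$, so only the case $D=1$ needs work.

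So the hard part is the single case $D=1$: say $A=R$ (hence $|A|=k$), so that $B$ is incomparable with $R$ (that is exactly what $A\cup B\ne R$ and $A\cap B\ne R$ amount to, given $A=R$). Here $A\cap B = R\cap B$ is a proper subset of $R$, so $|A\cap B|\le k-1$ and $g(A\cap B)=|A\cap B|$; and $|A\cup B|\ge |R| = k$, so $g(A\cup B)=k=g(A)$. Hence the $g$-surplus equals $g(B)-g(A\cap B)=\min(|B|,k)-|A\cap B|$, which is at least $1$: using $|A\cap B|\le|B|-1$ (valid because $A\cup B\ne A$ forces $A\cap B\subsetneq B$) when $|B|\le k$, and using $|A\cap B|\le k-1$ when $|B|>k$. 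Thus the $g$-surplus is $\ge 1 = D$ and the whole expression is $\ge 0$. The delicate point to write out carefully is precisely this: the $-1$ contributed by the indicator $\mathbbm{1}_R$ exactly when $A=R$ is incomparable with $B$ must be absorbed by the strictly positive slack that submodularity of $g$ provides in that same configuration; everything else is bookkeeping. As an alternative route one could instead observe that $r$ is the rank function of the rank-$k$ paving matroid on the ground set whose only circuit of size at most $k$ is $R$, and verify the matroid exchange axiom directly, but the computation above is shorter.
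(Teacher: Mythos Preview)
Your proposal is correct. Both you and the paper note that $r$ is a paving-matroid rank function and then give a self-contained argument, but the organizing principle differs. The paper splits directly into three cases by the sizes of $A,B$ (both at most $k$ with neither equal to $R$; one equal to $R$ and the other of size at most $k$; one of size exceeding $k$) and in each case compares $r(A)+r(B)$ with $|A\cup B|+|A\cap B|$ using $r(S)\le |S|$. Your decomposition $r=g-\mathbbm{1}_R$ with $g(S)=\min(|S|,k)$ is a cleaner reduction: it isolates exactly one nontrivial configuration (one of $A,B$ equals $R$ and is incomparable with the other), and there you show the uniform-matroid surplus is at least $1$, which is precisely the deficit from the indicator. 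This buys you a single computation instead of three, and makes transparent \emph{why} the $-1$ at $R$ cannot break submodularity: incomparability forces $A\cap B\subsetneq B$ and $|A\cap B|\le k-1$, giving the needed slack. The paper's case split is equally valid but slightly more ad hoc (its middle case implicitly assumes $A\not\subseteq R$ when asserting $|A\cup B|>k$, though the comparable case yields equality and is harmless).
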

\begin{proof}
	Indeed, the above is an example of a rank function of a paving matroid. We give a self-contained proof below.
	Monotonicity should be clear, and indeed, $r_M(S) \leq |S|$ also should be clear from the definition.
	For any two sets $A$ and $B$, if both $|A|$ and $|B|$ are $< k$ or $=k$ but neither is $R$, then $r(A) + r(B) = |A| + |B|$ which equals $|A\cup B| + |A\cap B| \geq r(A\cup B) + r(A\cap B)$.
	If one of them, say $B$, is $R$ and the other has $|A| \leq k$, then $r(A) + r(B) = |A| + k - 1$, but $|A\cup B|$ is surely $> k$ then (since $A$ contains items not in $R$), and so $r(A\cup B) \leq |A\cup B| - 1$. Thus, again, $r(A) + r(B) = |A| + |B| - 1 = (|A\cup B| - 1) + |A\cap B| \geq r(A\cup B) + r(A\cap B)$.
	If any one of them, say $A$, has $|A| > k$, then $r(A) + r(B) = k + r(B) \geq k + r(A\cap B)$ by monotonicity, and this is $= r(A\cup B) + r(A\cap B)$ since $|A\cup B| > k$ as well.
\end{proof}

Next define the function on $\{0,1\}^{n+2}$ as follows. We use $00S$ to denote $S$, $01S$ to denote $S\cup 2$, $10S$ to denote $S\cup 1$, and $11S$ to denote $S\cup 1 \cup 2$.
\[
f(00S)  = r(S) + \min(|S|, k) ~~~ f(01S) = f(10S) = f(11S) = r(S) + k
\]

\begin{claim}
	$f$ is a monotone submodular function.
\end{claim}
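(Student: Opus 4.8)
The plan is to verify the submodular inequality $f(X)+f(Y)\ge f(X\cap Y)+f(X\cup Y)$ for all $X,Y\subseteq\{0,1\}^{n+2}$ by case analysis on the first two coordinates (the ``prefix'' among $00,01,10,11$), reducing each case to the already-established submodularity of $r$ (\Cref{clm:pavms}) and of the truncation $S\mapsto\min(|S|,k)$. Monotonicity is the easy half: $r$ is monotone and $\min(|\cdot|,k)$ is monotone, and moving from prefix $00$ to any other prefix only replaces the $\min(|S|,k)$ summand by the larger constant $k$, while on prefixes in $\{01,10,11\}$ the function value depends only on the ``body'' $S$ through $r(S)+k$, which is monotone in $S$; I would dispatch this in a sentence.

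For submodularity, write a generic set as $pS$ where $p\in\{00,01,10,11\}$ is the prefix and $S\subseteq\{3,\dots,n+2\}$ the body, and define $g(S):=\min(|S|,k)$, so $f(00S)=r(S)+g(S)$ and $f(pS)=r(S)+k$ for $p\neq 00$. First I would record the two facts I will lean on: (i) $r$ is submodular, and (ii) $g$ is submodular (it is a concave function of the cardinality, hence submodular; alternatively $g=r_U$ for the uniform matroid of rank $k$). Now take $X=pA$, $Y=qB$; then $X\cap Y$ has prefix $p\wedge q$ (bitwise AND) and body $A\cap B$, and $X\cup Y$ has prefix $p\vee q$ and body $A\cup B$. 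Split into cases according to whether $00$ appears among $\{p,q,p\wedge q,p\vee q\}$:

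\emph{Case 1: neither $p$ nor $q$ is $00$.} Then $p\vee q\neq 00$ and $p\wedge q$ may or may not be $00$. If $p\wedge q\neq 00$, the inequality becomes $(r(A)+k)+(r(B)+k)\ge (r(A\cap B)+k)+(r(A\cup B)+k)$, i.e.\ submodularity of $r$. If $p\wedge q=00$ (e.g.\ $p=01,q=10$), the RHS is $(r(A\cap B)+g(A\cap B))+(r(A\cup B)+k)$, and since $g(A\cap B)\le k$ this is at most $(r(A\cap B)+k)+(r(A\cup B)+k)$, again reducing to submodularity of $r$.

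\emph{Case 2: exactly one of $p,q$ is $00$,} say $p=00$, $q\neq 00$. Then $p\wedge q=00$ and $p\vee q=q\neq 00$, so the desired inequality is $(r(A)+g(A))+(r(B)+k)\ge (r(A\cap B)+g(A\cap B))+(r(A\cup B)+k)$, which follows by adding the submodularity inequalities for $r$ and for $g$ (the latter gives $g(A)\ge g(A\cap B)$ after discarding the nonnegative $g(A\cup B)$, or more cleanly $g(A)+g(B)\ge g(A\cap B)+g(A\cup B)\ge g(A\cap B)$ since $g\ge 0$).

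\emph{Case 3: $p=q=00$.} Then all four sets have prefix $00$ and the inequality is exactly submodularity of $S\mapsto r(S)+g(S)$, i.e.\ the sum of the submodularity inequalities for $r$ and $g$.

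I expect the main (though still minor) obstacle to be bookkeeping: making sure the prefix-arithmetic $p\wedge q$, $p\vee q$ is stated cleanly so that the three cases genuinely exhaust all pairs $(p,q)$, and being careful in Case 1 with the subcase $p\wedge q=00$ where one uses $g(A\cap B)\le k$ rather than submodularity of $g$. Everything else is a direct appeal to \Cref{clm:pavms} and the elementary submodularity of the truncated cardinality function.
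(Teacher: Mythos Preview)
Your proposal is correct. The route differs from the paper's: you verify the lattice inequality $f(X)+f(Y)\ge f(X\cap Y)+f(X\cup Y)$ directly, doing a case split on the prefix pair $(p,q)$ and reducing each case to submodularity of $r$ and of $g(S)=\min(|S|,k)$ (plus the bound $g\le k$ in Case~1b and monotonicity of $g$ in Case~2). The paper instead uses the equivalent ``diminishing returns'' characterization and only checks \emph{squares}, i.e., it verifies $f(X\cup i)+f(X\cup j)\ge f(X)+f(X\cup\{i,j\})$ for all $X$ and $i,j\notin X$, splitting according to whether $i,j$ lie among the two distinguished coordinates or in the body. That buys them fewer and shorter computations (each ``square'' check is a one-line identity), at the cost of implicitly invoking the equivalence between the two definitions of submodularity. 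Your argument is a touch longer but is self-contained from the lattice definition and avoids that appeal; the only cosmetic point is that in Case~2 you should just say ``$g(A)\ge g(A\cap B)$ by monotonicity of $g$'' rather than routing through submodularity of $g$.
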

\begin{proof}(Sketch)
	Monotonicity across dimensions $1$ and $2$ follow from the definition since $r(S) + \min(|S|,k)$ becomes $r(S) + k$.
	Across other dimensions it follows from~\Cref{clm:pavms}.
	Think of $f$ as defined on four hypercubes defined by the first two coordinates. We need to check for submodularity on ``squares''.
	If all endpoints of these squares lie in the same sub-hypercube, then submodularity follows since $r(S) + k$ and $r(S) + \min(|S|,k)$ are
	both submodular since it's a sum of two submodular functions.
		If all endpoints are in different squares then
	\[
	f(10S) + f(01S) - f(11S) - f(00S) = k - \min(|S|,k) \geq 0
	\]
	If the endpoints are $00S$, $00(S+i)$, $10S$ and $10(S+i)$, then
	\[
	f(10S) + f(00(S+i)) - f(00S) - f(10(S+i)) = \min(|S+i|,k) - \min(|S|,k) \geq 0
	\]
	If the endpoints are $01S$, $01(S+i)$, $11S$ and $11(S+i)$, then
	\[
	f(01(S+i)) + f(11S) - f(01S) - f(11(S+i)) = 0
	\]
\end{proof}

Finally, note that for any set $S\subseteq \{3,4,\ldots, n+2\}$,
\[
h(S):=	f(S\cup 1) + f(S\cup 2) - f(S) = f(01S) + f(10S) - f(S) = r(S) + k - \min(|S|,k)
\]
When $S = R$, we get $h(R) = k-1$. When $|S| = k$ but $S\neq R$, $h(S) = k$. When $|S| < k$, $h(S) = k$. When $|S| > k$, then $h(S) = k$.
When $S$ contains $1$ or $2$ (or both), then let $T := S\cap \{3,4,\ldots, n+2\}$.
Note that $h(S) = r(T) + k \geq k$.
In short, the minimum value of $h(S)$ is obtained by this set $R$. If you knew the size of $|R| = k$, then note that any query to $r(S)$ doesn't give {\em any} information about $R$.
Thus, one needs to make exponentially queries to find $R$. 
\section{Further Differences between the Primal-Dual and Greedy} \label{sec:greedy-comp}
In~\Cref{subsec:illexample}, we illustrated via an example how the primal-dual and greedy algorithm differ. 
Indeed, we show that in the standard ``bad'' example for the greedy algorithm for the max-coverage problem, the primal-dual algorithm in fact does well (although a slight jiggling of the instance leads to bad behaviour for both algorithms). 

In fact, if one recalls the example where the greedy algorithm actually achieves the $\eta_k$-factor (showing that the analysis is tight), in that instance the optimal solution is indeed disjoint ``large'' sets. To recall, the set system is as follows: there are $k$ ``good'' sets and $k$ ``bad'' sets. Each good set contains $x$ elements and intersects no other good set. We now assign an arbitrary ordering to the ``bad'' sets to distinguish them from one another. Each ``bad'' set intersects no other ``bad'' sets, but evenly intersects every ``good'' set. Additionally, the first ``bad'' set, $B_1$, contains $x$ elements, while for every other set, the $ith$ ``bad'' set, $B_i$ contains $B_i - \frac{B_i}{k}$ elements. An example of this set system where $k=2, x=8$ and is shown below:
\begin{figure}[H]
	\centering
	\includegraphics[width=0.6\textwidth]{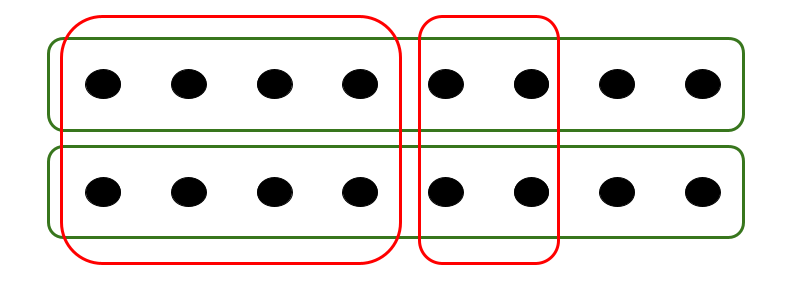}
	\caption{set system outlining the structure for the greedy worst-case example with ``good'' sets in green and ``bad'' sets in red.}
	\label{fig:greedy-worst}
\end{figure}
The optimal solution here is to pick all the ``good'' sets leading to all $kx$ elements being covered. Just as in the previous example, if you you allow one to adversarially make greedy's choices among equal contenders, you can force greedy to choose a ``bad'' set for every pick, leading to a total of $x\cdot k \left(1-\frac{1}{k}\right)^{k}$ elements being covered. Note that this quantity goes to $xk(1-\frac{1}{e})$ as $k \rightarrow \infty$. For the primal-dual algorithm, even if we adversarially make the algorithm pick the first ``bad'' set, $B_1$, the algorithm will still be unable to make dual progress as $\dot{\beta_{j^*}} = F_{j^*} - \beta{j^*} = (x - \frac{x}{k})-x = -\frac{x}{k}$ and $\dot{\gamma} = F-\gamma = x-0=x \implies k\dot{\beta_{j^*}}+\dot{\gamma} = 0$. This will cause the primal-dual algorithm to immediately pick another tight set, which must be one of the ``good'' sets. We then note that this will only cause $F$ to increase, while the marginal utility of all the other ``good'' sets will stay the same, continually disabling dual progress and causing the primal-dual algorithm to pick another tight set. Thus, even when forced to pick a ``bad'' set first, the primal-dual algorithm will pick $k-1$ ``good'' sets and only 1 ``bad'' set, covering a total of $(k-1)x + \frac{x}{k}$ elements. This, therefore, shows an example where the primal-dual algorithm is in fact 
$\approx e/e-1$ times better than the greedy algorithm. The reader, however, will note that even in the above example, if we apply a small fudge factor of $-\epsilon$ to $F_{j}$ for all the ``good'' sets the dual will now be able to make progress and this example itself turns into a worst case example for the primal-dual algorithm. 

In conclusion, the primal-dual algorithm is a different algorithm than the greedy algorithm and is incomparable if compared instance-by-instance (even for the max $k$-coverage problem). 
A better understanding of this algorithm may lead to a hybrid algorithm which empirically outperforms both these algorithms, and at the same time give a feasible dual which certifies the quality instance-by-instance. 
\section{On the Balkanski-Qian-Singer Upper Bound}\label{sec:bqs-dualfit}
In~\cite{BalkQS21}, Balkanski, Qian, and Singer give an interesting upper bound (called BQS) on the optimum value of \msm. Empirically, this upper bound has been observed by both their and our work to have the lowest (that is, best) value so far. As mentioned in the main body and reported in~\Cref{sec:experiment}, our primal-dual upper bound is comparable: certain instances we are better and certain instances they are, and none is ever far from the other. This section is motivated by the following seemed a natural question: does the BQS upper bound correspond to a feasible dual solution for the Nemhauser-Wolsey~\cite{NemhW81} LP? The answer is {\em no}; there exist ``integrality gap'' instances for the Nemhauser-Wolsey LP where the BQS upper bound equals the true optimum value while the LP value is strictly larger. Below, we describe this example; before we do so, we describe the BQS upper bound and give another self-contained proof of the fact that it is indeed an upper bound. 

\paragraph{The BQS Upper Bound.} We describe what is called ``Method 3'' in their paper\footnote{we refer to the implementation given in Appendix A of the arXiv version: \href{https://arxiv.org/pdf/2102.11911.pdf}{https://arxiv.org/pdf/2102.11911.pdf}}. They later obtain a better upper bound called ``Method 4'' which uses the greedy algorithm's output.
Method 3, however, is stand-alone, and we restrict our attention to this and it suffices for the example we show to differentiate from the Nemhauser-Wolsey LP. 

To begin with, let us rename the items of $V$ as $\{a_1, a_2, \ldots, a_n\}$ in decreasing order of $f({\{j\}})$. So, $f(\{a_1\}) \geq f(\{a_2\}) \geq \cdots \geq f(\{a_n\})$.
Next, for any $1\leq i\leq n$, define $A_i := \{a_1, \ldots, a_i\}$ be the first $i$ elements in this order which we unimaginatively call the $A$-order. Now we are ready to describe their algorithm.
\begin{center}
	\fbox{%
		\begin{varwidth}{\dimexpr\linewidth-2\fboxsep-2\fboxrule\relax}
				\begin{algorithmic}[1]
					\Procedure{BQS Method 3}{$f,V,k$}: 
					\For{$j=1$ to $k$}
					\State $i_j \gets \min \{i ~:~ f(A_i) - \sum_{\ell=1}^{j-1}v_\ell \geq f(a_i)\}$ \label{bqs:i}
					\If{$f(A_{i_j-1})-\sum_{\ell=1}^{j-1}v_\ell \geq f(a_{i_j})$} \label{bqs:if}
					\State $v_j \gets f(A_{i_j-1})-\sum_{\ell=1}^{j-1}v_\ell$ \label{bqs:vjif}
					\Else
					\State $v_j \gets f(a_{i_j})$ \label{bqs:else}
					\EndIf
					\EndFor
					\State \Return $\sum_{j=1}^kv_j$
					\EndProcedure
				\end{algorithmic}					
		\end{varwidth}%
	}
\end{center}
It is not immediately clear why $\sum_{j=1}^k v_j$ would be an upper bound on $\opt$.
Indeed, in their paper~\cite{BalkQS21} nicely describe the method which led to this upper bound (it goes via considering the minimum submodular set cover problem a la Wolsey~\cite{Wolsey82}).
Below we give a direct proof of this fact which may be instructional; the reader wishing to see the gap example may skip ahead. In fact, our proof shows that the BQS upper bound also holds
for the maximization problem of {\em subadditive} functions; a function is subadditive if $f(A \cup B) \leq f(A) + f(B)$.
\begin{theorem}
	For any monotone, {\em subadditive} function $f$, we have $\max_{S:|S|=k} f(S) \leq \sum_{j=1}^k v_j$, the solution returned by {\sc BQS Method 3}.
\end{theorem}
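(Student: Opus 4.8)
The plan is to prove the stronger statement that for every $j\in\{0,1,\dots,k\}$ the partial sum $V_j:=\sum_{\ell=1}^{j}v_\ell$ already upper-bounds $h(j):=\max_{T:\,|T|\le j}f(T)$, the best value achievable with at most $j$ elements. Since $|V|=n\ge k$ and $f$ is monotone, $h(k)=\max_{|S|=k}f(S)$, so the case $j=k$ is exactly the theorem. I would prove $h(j)\le V_j$ by induction on $j$; the base case is $V_0=0=h(0)$.

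For the inductive step, assume $h(j-1)\le V_{j-1}$. The first ingredient is a purely subadditive bound on $h(j)$ in terms of the prefixes $A_q$. Fix any $T$ with $|T|=j$ and let $a_q$ be the element of $T$ of largest index, so $T\subseteq A_q$. Monotonicity gives $f(T)\le f(A_q)$, while subadditivity and the induction hypothesis give $f(T)\le f(T\setminus\{a_q\})+f(\{a_q\})\le h(j-1)+f(\{a_q\})\le V_{j-1}+f(\{a_q\})$. Hence $f(T)\le\min\!\big(f(A_q),\,V_{j-1}+f(\{a_q\})\big)$, and taking the maximum over all $T$ (equivalently, over the index $q$ its largest element can be) yields $h(j)\le\max_{1\le q\le n}\min\!\big(f(A_q),\,V_{j-1}+f(\{a_q\})\big)$.

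The second and main ingredient is the lemma that $V_j$ dominates this maximum: for every index $q$, $V_j\ge\min\!\big(f(A_q),\,V_{j-1}+f(\{a_q\})\big)$, i.e.\ $v_j\ge\min\!\big(f(A_q)-V_{j-1},\,f(\{a_q\})\big)$. Here I would case on whether $q<i_j$ or $q\ge i_j$, using that by construction $i_j$ is exactly the first index at which $f(A_i)-V_{j-1}\ge f(\{a_i\})$ holds, and that it keeps holding for all larger $i$ because $f(A_i)$ is nondecreasing and $f(\{a_i\})$ nonincreasing in $i$ (the latter by the sorting that defines the $A$-order). For $q<i_j$ the pointwise minimum equals $f(A_q)-V_{j-1}\le f(A_{i_j-1})-V_{j-1}$, and in both branches of Method~3 one has $v_j\ge f(A_{i_j-1})-V_{j-1}$ — with equality in the ``if''-branch (line~\ref{bqs:vjif}), and with strict inequality in the ``else''-branch (line~\ref{bqs:else}) precisely because the condition on line~\ref{bqs:if} failed. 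For $q\ge i_j$ the pointwise minimum equals $f(\{a_q\})\le f(\{a_{i_j}\})$, and again in both branches $v_j\ge f(\{a_{i_j}\})$ — with equality in the ``else''-branch, and directly from the condition on line~\ref{bqs:if} in the ``if''-branch. Combining the two ingredients closes the induction, and $j=k$ gives the claim. The only remaining wrinkle is the degenerate case in which no valid $i_j$ exists (the $\min$ on line~\ref{bqs:i} is over the empty set): one sets $i_j=n$, the ``else''-branch fires, and then $V_j=V_{j-1}+f(\{a_n\})>f(V)\ge h(j)$ directly from the failure of the defining inequality at $i=n$.

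I expect the heart of the proof to be the case analysis in the dominance lemma: the non-obvious point is that $i_j$ is exactly the crossover index at which $\min(f(A_q)-V_{j-1},\,f(\{a_q\}))$ switches from its ``prefix-value'' regime to its ``singleton'' regime, and that the two branches of Method~3 are engineered so that $v_j$ beats the worst case in each regime — the value at $q=i_j-1$ on the left and at $q=i_j$ on the right. Everything else (the subadditive decomposition, the induction bookkeeping) is routine. I would also emphasize that the argument uses only monotonicity and subadditivity and never submodularity, which is exactly why the upper bound extends to all monotone subadditive $f$ as stated.
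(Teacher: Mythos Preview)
Your proof is correct, but it proceeds quite differently from the paper's. The paper gives a short non-inductive argument: for a fixed $S=\{a_{j_1},\dots,a_{j_k}\}$ (indices in $A$-order), it lets $r$ be the largest index with $i_r>j_r$ (so $i_\ell\le j_\ell$ for $\ell>r$), splits $S$ once as $(S\cap A_{i_r-1})\cup\{a_{j_{r+1}},\dots,a_{j_k}\}$, applies subadditivity once, and bounds the prefix piece by $\sum_{\ell\le r}v_\ell$ via $f(A_{i_r-1})\le V_r$ and each remaining singleton by $f(a_{j_\ell})\le f(a_{i_\ell})\le v_\ell$. That is the whole proof.

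Your route instead establishes the stronger chain $V_j\ge h(j)$ for all $j$ by induction, isolating the ``crossover'' interpretation of $i_j$ as the index where $\min(f(A_q)-V_{j-1},f(\{a_q\}))$ flips regimes, and checking that both branches of the algorithm dominate the worst case in each regime. This buys you the intermediate inequalities (useful if one ever wants per-$j$ bounds) and a clean conceptual picture of why the two branches of Method~3 exist, at the cost of a longer argument. The paper's proof is terser because it bundles your two regimes into a single split point $r$ chosen per set $S$, rather than per step $j$; both arguments ultimately rest on the same two facts $v_j\ge f(\{a_{i_j}\})$ and $V_j\ge f(A_{i_j-1})$ that you derive in your case analysis.
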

\begin{proof}
Note that the algorithm also terminates with $k$ indices $i_1 < i_2 < \cdots < i_k$. 
The ``if-condition'' of Line~\ref{bqs:if} leads to the following observation.
\begin{observation}\label{obs:bqs-o}
	For all $1\leq j \leq k$, $v_j \geq f(a_{i_j})$.
	For all $1\leq j \leq k$, $\sum_{\ell=1}^j v_\ell \geq f(A_{i_j-1})$.
\end{observation}
\noindent
Fix any set $S$ with $|S|\leq k$. 
Consider the elements of $S$ in the $A$-order, that is, $S = \{a_{j_1}, a_{j_2}, \ldots, a_{j_k}\}$ with $j_1 < j_2 < \cdots < j_k$.
Let $r$ be the largest index in $[k]$ such that $i_r > j_r$; if no such $r$ exists, then let this be $0$.
Note that for all $\ell > r$ we have $i_\ell \leq j_\ell$ which implies $f(a_{j_\ell}) \leq f(a_{i_\ell})$.
Also note $i_r > j_r$ implies $j_r \leq i_r - 1$, and so $f(S \cap A_{i_r - 1}) \leq f(A_{i_r - 1}) \leq \sum_{\ell=1}^r v_\ell$.
Thus,
\[
f(S) \underbrace{\leq}_{\text{subadditivity}} f(S\cap A_{i_r - 1}) + \sum_{\ell=r+1}^k f(a_{j_\ell}) \leq f(A_{i_r - 1}) + \sum_{\ell=r+1}^k f(a_{i_\ell}) \le \sum_{j=1}^k v_j\arxiv{\qedhere}
\]	
\end{proof}
\noindent
Although the above upper bound holds even for subadditive functions, the ``Method 4'' of~\cite{BalkQS21} applies the above method not on $f$ directly but on $f_S$ defined as $f_S(T) := f(T\cup S) - f(S)$
where $S$ is the solution of the greedy algorithm. More precisely, the upper-bound of Method 4 is $f(S) + \sum_{j=1}^k v^{(S)}_j$ where $v^{(S)}_1, \ldots, v^{(S)}_k$ is the solution returned by Method 3 run on the function $f_S$.
Note that this ``contraction'' preserves submodularity but not subadditivity, and so if $f$ were a subadditive function, then Method 4 may not return a valid upper bound (but would if $f$ were submodular). Balkanski~\etal also show that there exist submodular functions for which the upper bound of Method 3 can be very bad; but Method 4's output is always within factor $2$ of the optimum.
This argument is similar to the dual-fitting argument we mention in~\Cref{sec:2} how putting $\tau$-mass on $S$ gives a dual solution of at most $2f(S)$.

\paragraph{An Integrality Gap Example.} 
We now describe a simple example which is an integrality gap example for the Nemhauser-Wolsey LP, that is, the LP value is strictly larger than the integer optimum, but the BQS value equals the integer optimum. The universe has three elements $\{a_1, a_2, a_3\}$ and $k=2$. The function is defined as $f(\emptyset) = 0$, $f(a_1) = f(a_2) = 3$ and $f(a_3) = 1$. 
The function value on any set of two elements is $4$, while $f(\{a_1, a_2, a_3\}) = 5$. Clearly, the optimum value for \msm~is $4$.
The BQS upper bound is also the same with $v_1 = f(a_1) = 3$ and $v_2 = f(a_3) = 1$ giving $v_1 + v_2 = 4$.

The Nemhauser-Wolser LP~\eqref{eq:p1} has an integrality gap. To see this, consider $x_1 = x_2 = x_3 = 2/3$ and $\lambda = 4\frac{1}{3}$.
One can now check that $f(S) + \sum_{j} f_S(j)x_j$ is minimized for the sets $\{a_1\}$ and $\{a_2\}$, and for both the value is $4\frac{1}{3}$.

\ipco{}
\end{document}